\title{Practical algorithms for Hierarchical overlap graphs}
\author{Saumya Talera}{Department of Computer Science and Engineering, Indian Institute of Technology Roorkee, India}{saumya_t@cs.iitr.ac.in}{}{}
\author{Parth Bansal}{Department of Computer Science and Engineering, Indian Institute of Technology Roorkee, India}{parth_b@cs.iitr.ac.in}{}{}
\author{Shabnam Khan}{Department of Computer Science and Engineering, Indian Institute of Technology Roorkee, India}{shabnam_k@cs.iitr.ac.in}{}{}
\author{Shahbaz Khan}{Department of Computer Science and Engineering, Indian Institute of Technology Roorkee, India}{shahbaz.khan@cs.iitr.ac.in}{https://orcid.org/0000-0001-9352-0088}{}
\authorrunning{S. Talera, P. Bansal, S. Khan, S. Khan} %TODO mandatory. First: Use abbreviated first/middle names. Second (only in severe cases): Use first author plus 'et al.'
\keywords{Hierarchical Overlap Graphs, String algorithms, Genome assembly} %TODO mandatory; please add comma-separated list of keywords
\newcommand{\sbzkc}[1]{{\color{cyan} (#1)}}
\newcommand{\sbzkr}[1]{{\color{red} \sout{#1}}}
\begin{document}

\maketitle

%TODO mandatory: add short abstract of the document
\begin{abstract}
One of the most prominent problems studied in bioinformatics is genome assembly, where, given a set of overlapping substrings of a source string, the aim is to compute the source string. Most classical approaches to genome assembly use assembly graphs built using this set of substrings to compute the source string efficiently. Prominent such graphs present a tradeoff between scalability and avoiding information loss. The space efficient (hence scalable) de Bruijn graphs come at the price of losing crucial overlap information. On the other hand, complete overlap information is maintained by overlap graphs at the expense of quadratic space.
Hierarchical overlap graphs were introduced by Cazaux and Rivals~[IPL20] to overcome these limitations, i.e., avoiding information loss despite using linear space. However, their algorithm required superlinear space and time. After a series of suboptimal improvements, two optimal algorithms were simultaneously presented by Khan~[CPM2021] and Park~et~al.~[CPM2021].

%From long time ago Classical approaches like Assembly graphs , de Bruijn graphs and Overlap graphs were used to solve the problem of Genome Assembly . Genome Assembly aims to generate the source genome from a given set of overlapping strings and therefore appears as a major challenge in Bioinformatics. However these graphs do not contain the whole information of the overlaps among the strings and produce a trade off between overlap information and scalability in the algorithms .Hierarchical Overlap Graph was used to address all the problems incurred in these approaches.

We empirically analyze all the algorithms for computing Hierarchical overlap graphs, where the optimal algorithm~[CPM2021] outperforms the previous algorithms as expected. %For a set of strings ranging in the order of xxx total length, the optimal algorithms improve the previous state of the art by over Kx times. 
However, it is still based on relatively complex arguments for its formal proof and uses relatively complex data structures for its implementation. We present an  \textit{intuitive}, \textit{optimal} algorithm requiring linear space and time which uses only \textit{elementary arrays} for its implementation. Despite the superior performance of optimal algorithm~[CPM2021] over previous algorithms, it comes at the expense of extra memory. Our algorithm empirically proves even better for both time and memory over all the algorithms, highlighting its significance in both theory and practice. 

%For a given set $P$ of $k$ strings with total length $n$, the first algorithm to compute HOG was given by Cazaux and Rivals[IPL20] requiring $O(n + k^2)$ time using superlinear space, where $n$ is the cumulative sum of the lengths of strings in $P$. This was improved by Park et al.[SPIRE20] to $O(n \log{k})$ time and $O(n)$ space using segment trees, and further to $O(\frac{n \log{k}}{\log{\log{k}}} )$ for the word RAM model.This bound was further improved in the algorithm given by Shahbaz Khan \cite{DBLP:conf/cpm/000421} to $O(n)$ time and space.Here, we present an alternative algorithm for computing Hierarchical Overlap Graph (HOG) in linear time and space complexity, denoted as $O(n)$. Notably, this algorithm avoids the use of intricate data structures and solely relies on arrays in its computational process.

We further explore the applications of hierarchical overlap graphs to solve variants of suffix-prefix queries on a set of strings, recently studied by Loukides et al.~[CPM2023]. They presented state-of-the-art algorithms requiring complex black-box data structures, making them seemingly impractical. Our algorithms, despite failing to match their theoretical bounds, answer queries in $0.002$-$100~ms$ for datasets having around a billion characters.

\end{abstract}

\newpage

\section{Introduction}
\label{sec:introduction}
Genome assembly is one of the most fundamental problems in Bioinformatics. For a source string (\textit{genome}), practical limitations allow sequencing of only a collection of its substrings (\textit{reads}) instead of the whole string. Thus, genome assembly aims to reconstruct the source genome using the sequenced reads. This is possible only because the sequencing ensures coverage of the entire string so that we find sufficient overlap among the substrings. This overlap information plays a fundamental role in solving the genome assembly problem. In most practical approaches~\cite{Velvet,DBLP:journals/jcb/BankevichNAGDKLNPPPSVTAP12,DBLP:conf/recomb/NurkMKP16,DBLP:journals/bioinformatics/AntipovKMP16,PevznerTW01,DBLP:journals/bioinformatics/SimpsonD10}, this overlap information is efficiently processed by representing the substrings in the form of \textit{assembly graphs}, such as \textit{de Bruijn graphs}~\cite{Pevzner1989} and \textit{overlap graphs} (or string graphs~\cite{DBLP:conf/eccb/Myers05}). The de Bruijn graphs present a fundamental trade-off of achieving scalability (space efficiency) at the expense of loss in overlap information. This is possible by considering all possible substrings (\textit{k-mers}) of the reads having length $k$, and storing the limited overlap information amongst them, thereby ensuring linear space. On the other hand, overlap graphs store the maximum overlap between every pair of reads \textit{explicitly}, thereby requiring quadratic size, making them impractical for large data sets.  %Moreover, it also recursively stores the overlaps between the above overlaps and so on, which may have further applications.

Hierarchical Overlap Graphs (HOG) were formally introduced by Cazaux and Rivals~\cite{DBLP:journals/ipl/CazauxR20} as an alternative to overcome the limitations of the existing assembly graphs, maintaining the complete overlap information \textit{implicitly} using optimal space (linear). Structurally it is similar to AC Trie~\cite{DBLP:journals/cacm/AhoC75} where HOG has nodes corresponding to the original substrings and the maximum overlap between every pair of these substrings only (instead of all prefixes as in AC Trie). Notably, the linear size of HOG (despite storing complete overlap information) highlights the redundancy of data stored in overlap graphs. Several applications of HOG have been formerly studied in~\cite{DBLP:conf/dcc/CazauxCR16,DBLP:journals/corr/CanovasCR17}. For a given set $P$ of $k$ strings with total length $n$, they~\cite{DBLP:journals/ipl/CazauxR20} compute the HOG in $O(n+k^2)$ time using superlinear space. Later, Park et al.~\cite{DBLP:conf/spire/ParkCPR20} presented an algorithm requiring $O(n\log k)$ time and linear space. However, they assumed the character set of the strings to be of constant size. Finally, Khan~\cite{DBLP:conf/cpm/000421} and Park et al.~\cite{DBLP:conf/cpm/ParkPCPR21} independently presented optimal algorithms for computing the HOG in $O(n)$ time. While the former algorithm only used standard data structures (as lists, stacks), the latter reduced the problem to that of computing borders from the classical KMP algorithm~\cite{DBLP:journals/siamcomp/KnuthMP77}.\\ %requiring complex black box computations, which itself is a well studied string problem having several practical approaches under different contexts, thus, its complete evaluation is beyond the scope of this paper. 

\noindent
\textbf{Related Work.}
All the above algorithms, in theory, remove the non-essential nodes from an AC Trie (ACT)~\cite{DBLP:journals/cacm/AhoC75}. In practice, they can use an intermediate \textit{Extended HOG} (EHOG) to reduce the memory requirement for these algorithms. Thus, the \textit{fundamental} problem is marking those nodes of an ACT (or EHOG), which represent the set of maximum overlaps between every ordered pair of strings in $P$ which is a special case of the All Pairs Suffix Prefix (APSP) problem (not to be confused with the classical graph problem of computing All Pairs Shortest Paths). APSP aims to calculate the maximum overlap between every pair of strings among a set of strings. Gusfield et al.~\cite{DBLP:journals/ipl/GusfieldLS92} optimally solved this classical problem in $O(n+k^2)$ time and $O(n)$ space using generalized suffix tree~\cite{DBLP:conf/focs/Weiner73}. Later, other optimal algorithms were presented using generalized suffix array~\cite{DBLP:journals/siamcomp/ManberM93} (\cite{DBLP:journals/ipl/OhlebuschG10,DBLP:journals/jda/TustumiGTL16}) and Aho Corasick automaton~\cite{DBLP:journals/cacm/AhoC75} (\cite{DBLP:journals/ipl/LoukidesP22}). Moreover, in practice, several algorithms having suboptimal bounds such as Readjoiner~\cite{DBLP:journals/bmcbi/GonnellaK12}, SOF~\cite{HajRachid2015}, and a recent algorithm by Lin and Park~\cite{DBLP:journals/tcs/LimP17}, have far superior performance for APSP, demonstrating the stark difference between theory and practice. Thus, any empirical analysis cannot trivially overlook the suboptimal algorithms. Note that, since the APSP problem \textit{explicitly} reports the maximum overlap for each of the $k^2$ ordered pair of strings, the above bound is optimal. However, when only the set of {\em maximum overlaps} is required (as in the case of HOG), the $O(k^2)$ factor proves suboptimal. 

Recently, Loukides et al.~\cite{DBLP:conf/cpm/LoukidesPTZ23} have studied several practically relevant variants of the APSP problem and presented state-of-the-art algorithms for the same. However, most of these algorithms involve complex black-box data structures, making them seemingly impractical. Another interesting variant of the APSP problem was solved by Ukkonen~\cite{DBLP:journals/algorithmica/Ukkonen90}, which reports all pairwise overlaps (not just maximum) in the decreasing order of lengths. Notably, this problem can be solved by EHOG by reporting the internal nodes bottom-up.

%Genome Assembly is a major challenge in Bioinformatics . An entire string can not be sequenced , collection of substrings (also called as reads ) are made with sufficient overlap between them. The reads are produced such that they cover the entire genome . Genome assembly aims to generate the source genome given the set of overlapping reads . Assembly graphs such as de Bruijn and overlap graphs are being used to contain the overlapping information among the reads.

%The de Bruijn graphs covers k-1 length overlap arcs of k length substring of the reads( or nodes ) . This faces the problem of not covering any overlap other than k – 1 length .However the overlap graphs have each read as a node with edge corresponding to the maximum overlap between them . This stores more information of overlap than the former . Here the limitations include worst case quadratic size , repeated inclusion of same overlap and practical difficulty of computing edge weights in large datasets.

%To overcome these difficulties Hierarchical Overlap Graphs were introduced. The HOG represents the longest overlaps between every pair of string as nodes and edges which connects strings to their suffix and prefix. This is done using linear space . Therefore HOG proves to be a better solution than both de Bruijn and Overlap Graphs.

\subsection{Our Results}
Our results can be succinctly outlined as follows.

\begin{enumerate}
    \item \textbf{Novel optimal algorithm for HOG.}  We present arguably a very \textit{intuitive} algorithm, which is also \textit{optimal}. Our algorithm runs in linear time and space. Despite the simplicity of the existing optimal algorithms, they still use relatively complex arguments for their formal proof and relatively complex data structures for implementation. However, our algorithm uses only \textit{elementary arrays} for implementation with a very \textit{intuitive} proof.  

    \item \textbf{Empirical evaluation of practical HOG algorithms.} We analyzed all the seemingly practical algorithms on random and real datasets. On real datasets, the previous optimal algorithm improves the other algorithms by $1.3$-$8\times$ using $1.5$-$6\times$ more memory. However, the proposed optimal algorithm improves the best time by $1.3-1.7\times$ and memory by $\approx 1.5\times$. The relative performances improve with the size of the dataset. On random datasets having $~10^5$ strings with a total length of $10^7$, the optimal algorithm improves the previous state of the art by over $2-4\times$ times, while our algorithm improves the optimal by $2.5-3\times$. Further, we find that the performance mainly depends on an intermediately computed EHOG used by all the algorithms. Our results prove that the proposed algorithm performs superior to the previous algorithms on both random and real datasets, highlighting its significance in both theory and practice.
         
    \item \textbf{Applications of HOG.}
    We also explore the applications of HOG to solve various variants of suffix-prefix queries on a set of strings. Despite not matching the theoretical bounds of the state of the art~\cite{DBLP:conf/cpm/LoukidesPTZ23}, which require complex black-box data structures, our algorithms require $0.002$-$100~ms$  for queries on a data set of a billion characters, improving $18$-$1300\times$ over classical KMP~\cite{DBLP:journals/siamcomp/KnuthMP77} for complex queries.
\end{enumerate}

\noindent
\textbf{Outline of the paper.} We first describe the basic notations and definitions used throughout the paper in \Cref{sec:prelim}. In \Cref{sec:prevWork}, we briefly describe the previous results, highlighting the aspects affecting practical performance. This is followed by describing our proposed algorithm in \Cref{sec:algo}. Then, in \Cref{sec:exp}, we present the experimental setup and evaluation of the practical algorithms on real and random datasets, justifying the observations using further evaluation. \Cref{sec:apps} discussed some applications of HOG on variants of APSP queries. We finally present the conclusions and scope of future work in \Cref{sec:conclusion}.

\section{Preliminaries}
\label{sec:prelim}
We are given a set $P = \{p_1, p_2, ... p_k\}$ of $k$ non-empty strings over a finite-set alphabet. The length of a string $p_i$ is denoted by $|p_i|$. Let the total length of all strings $p_i\in P$ be $n$ ($\geq k$ as non-empty strings). An empty string is denoted by $\epsilon$. A substring of a string $p$ starting from the first character of $p$ is a \textit{prefix} of $p$, while a substring ending at the last character of $p$ is known as a \textit{suffix} of $p$. A prefix or suffix of $p$ is called \textit{proper} if it is not the same as $p$. For an ordered pair $(p,q)$, an $overlap$ is a string that is both a proper suffix of $p$ and a proper prefix of $q$, and $ov(p,q)$ denotes the maximum such overlap. We denote $Ov(P)$ as the set of $ov(p_i, p_j)$ for all $p_i, p_j \in P$. 
%Note $P\subseteq Ov(P)$ as $ov(p,p)=p$. 
We now use the above definition to define the HOG.

For a given $P$, the ACT $\cal{A}$, EHOG $\cal{E}$, and HOG $\cal{H}$ are defined using different vertex sets as follows. The vertices of $\cal A$ are all possible prefixes of the strings in $P$ (hence including $P$ and $\epsilon$). The vertices of $\cal E$ are $P\cup\{\epsilon\}$ and all possible (not necessarily maximum) overlaps of strings in $P$. The vertices of $\cal H$ are further restricted to the maximum overlaps $Ov(P)\cup P\cup\{\epsilon\}$. Hence for a given $P$, it holds $V({\cal H})\subseteq V({\cal E})\subseteq V({\cal A})$. For all the above, we add the following types of edges: (a) \textit{Tree edges} $(y,x)$ for all $x\in V$ with label $z$, where $y$ is the longest proper prefix of $x$ in $V$ and $x=yz$, and (b) \textit{Suffix links} $(x,y)$ represented by dotted red edges, for all $x\in V$ where $y$ is the longest proper suffix of $x$ in $V$. See \Cref{fig:aehog} for examples of the same. Also, for each of the structures the number of nodes is $O(n)$. %All of HOG~\cite{DBLP:conf/cpm/000421,DBLP:conf/cpm/ParkPCPR21}, EHOG~\cite{DBLP:journals/ipl/CazauxR20} and ACT~\cite{DBLP:journals/cacm/AhoC75} can be constructed in linear time and space.% Notably, the vertex sets $V(HOG(P))\subseteq V(EHOG(P))\subseteq V(ACT(P))$. Also, since . 

%\sbzkc{Fig for HOG, EHOG, ACT? that highlights difference wrt new algo}

We abuse the notation to interchangeably refer to a node as a string (corresponding to it) and vice-versa. The name \textit{tree edges} allows us to abuse the notation and treat each of $\cal A, E$ and $\cal H$ as a tree with additional suffix links. This makes the outgoing tree edge neighbors of a node its children, which extends to the notion of descendants and ancestors. This also makes the \textit{leaves} of the tree a \textit{subset} of $P$, the empty string $\epsilon$ as the root, and the remaining strings as \textit{internal nodes}. Note that some string $p_i\in P$ can also be an internal node if it is a prefix of some other string $p_j\in P$, which is why leaves of the tree are not necessarily the same as $P$. Further, starting from every node $p_i\in P$ and following the suffix links until we reach the root gives us the \textit{suffix path} of $p_i$. We further have the following property.

%\sbzkc{make generalized definition for all using $ov(p,p)=p$}

\begin{figure}
    \centering
    \includegraphics[scale=1.2]{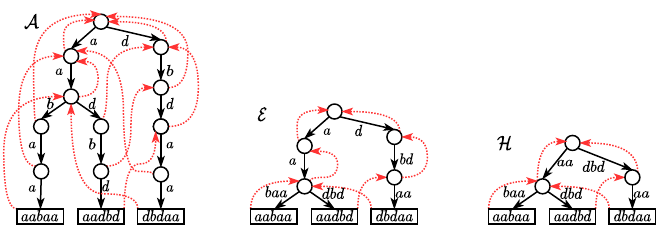}
    \caption{The structures ${\cal A}$, ${\cal E}$ and ${\cal H}$ for $P=\{aabaa,aadbd,dbdaa\}$ (reproduced from~\cite{DBLP:conf/cpm/000421}).}
    \label{fig:aehog}
\end{figure}

\begin{lemma}[HOG Property~\cite{DBLP:journals/ipl/GusfieldLS92}]
%\begin{lemma}[\cite{DBLP:journals/ipl/GusfieldLS92}]
\label{lemma:inOv}
    An internal node $v$ in $\cal{A}$ of $P$, is $ov(p_i, p_j)$ for two strings $p_i, p_j \in P$ iff $v$ is an overlap of $(p_i, p_j)$ and no descendant of $v$ is an overlap of $(p_i, p_j)$.
%\end{lemma}
%\begin{restatable}[Goldbach's conjecture]{thm}{goldbach}
%\label{thm:goldbach}
%Every even integer greater than 2 can be expressed as the sum of two primes.
\end{lemma}

%\declaretheorem[name=Theorem,numberwithin=section]{thm}

\section{Previous work}
\label{sec:prevWork}
We now describe the relevant previous results related to the computation of HOG.
%\noindent
%\textbf{Aho-Corasick Trie (ACT) and Extended Hierarchical Overlap Graph (EHOG).}\\
For a given set of strings $P$, the ACT~\cite{DBLP:journals/cacm/AhoC75} can be computed in linear time and space. The classical algorithm builds a Trie and adds the suffix links efficiently using dynamic programming. EHOG~\cite{DBLP:journals/ipl/CazauxR20} can be computed by simply traversing the suffix path of every string in $P$ in linear time, and removing the untraversed vertices. The corresponding labels of tree edges and suffix links with its labels are also updated simultaneously.

%\sbzkc{add details about the previous algorithms, describing how the other optimal algorithm isn't practical because of black box algorithms.}

%\sbzkc{add about act and ehog construction as well}

Given an ACT (or EHOG) of $P$, its HOG can be constructed by removing the nodes not satisfying \Cref{lemma:inOv}, and updating edge labels and suffix links as in EHOG. 
Various approaches to compute $Ov(P)$ are

\begin{enumerate}
    \item \textbf{Cazaux and Rivals~\cite{DBLP:journals/ipl/CazauxR20} requiring $O(n+k^2)$ time.} This algorithm computes a list $R_l(u)$ containing all leaves having ${u}$ as its suffix by traversing over all suffix paths. A bit vector maintains for each internal node $x$, the strings in $P$ having overlaps in the descendants of $x$. The bit vector can be updated in a bottom-up fashion, and compared with $R_l(u)$ to decide whether $u$ is added to $\cal H$. Along with suboptimal time, it also uses super linear space seemingly limiting its scalability. %Overall it uses simple data structures making it practical.

\item \textbf{Park et al.~\cite{DBLP:conf/spire/ParkCPR20} requiring $O(n\log k)$ time. }
  This algorithm sorts the strings in ${P}$ lexicographically, which makes a contiguous set of leaves the descendant of each internal node. This allows them to use the practical \textit{segment tree} data structure allowing update over an interval of in $O(\log k)$ time. For each node in the suffix path of $p_i\in P$, they simply count the number of descendant leaves (eligible prefixes) that are not covered by a larger prefix of $p_i$. This allows them to process each node of every suffix path (total $O(n)$) in a single segment tree query and update requiring linear space. 
 
\item \textbf{Khan~\cite{DBLP:conf/cpm/000421} requiring $O(n)$ time.}
This algorithm also computes $R_l(u)$ as in~\cite{DBLP:journals/ipl/CazauxR20} (referred to as ${\cal{L}}_u$) and performs a traversal of $\cal{A}$ which maintains the lowest internal node having each $p_i\in P$ as a suffix on top of a stack $S_i$. On reaching a leaf $p_j\in P$, each such internal node, say top of $S_i$, is the maximum overlap $ov(p_i,p_j)$. They only use stacks and lists for their implementation, giving a simple algorithm with a relatively complex proof. 

\item \textbf{Park et al.~\cite{DBLP:conf/cpm/ParkPCPR21} requiring $O(n)$ time.}
The algorithm computes the overlaps by reducing the problem to that of computing borders or failure links in the classical KMP algorithm~\cite{DBLP:journals/siamcomp/KnuthMP77}. This allows each internal node to uniquely identify the ancestors whose presence in $\cal H$ is affected by them (recall HOG property from \Cref{sec:prelim}) in linear time. 
\end{enumerate}

Note that, except for the last algorithm~\cite{DBLP:conf/cpm/ParkPCPR21}, all the algorithms can operate directly on $\cal E$ (instead of $\cal A$ or $P$) resulting in significant performance enhancement when $|\cal E|\ll |\cal A|$. 

%simple and self-contained data structures as vectors, stacks, lists, segment trees, etc., making them promising for use in our practical evaluation.

%All of HOG~\cite{DBLP:conf/cpm/000421,DBLP:conf/cpm/ParkPCPR21}, EHOG~\cite{DBLP:journals/ipl/CazauxR20} and ACT~\cite{DBLP:journals/cacm/AhoC75} can be constructed in linear time and space.% Notably, the vertex sets $V(HOG(P))\subseteq V(EHOG(P))\subseteq V(ACT(P))$. Also, since . 

\section{Proposed Algorithm}
\label{sec:algo}

As described earlier, all the algorithms essentially mark the nodes $Ov(P)$ of $\cal A$ (or $\cal E$) which will be added to $\cal H$, followed by a single traversal~\cite{DBLP:conf/cpm/000421} to generate $\cal H$. These algorithms take $\cal A$ (or $\cal E$) as input and produce $in{\cal H}$ array as an output. For any node $v$, $in{\cal H}[v]$ is a boolean such that if its value is true then $v$ should be in $\cal H$. We thus focus here on optimally computing this $in{\cal H}$ array. We describe our algorithm incrementally, \textit{firstly} to mark all overlaps for pairs $(p_i, p_j)$ for a fixed $p_i$ and all $p_j\in P$, \textit{secondly} optimize it to require $O(|p_i|)$ time after initialization, and \textit{finally} generalize it to find $Ov(P)$ in $O(n)$ time.
We use the HOG Property (\Cref{lemma:inOv}) to identify the nodes in $Ov(P)$.
%Let us first recall an essential property of HOG that we use here.
%\inOv*
%\noindent\textbf{Remark.} 
For the sake of simplicity, we assume no string in $P$ is a prefix of another, making the nodes in $P$ be the leaves of $\cal A$. We shall later address how to handle this case separately.

\subsection{Marking all overlaps $ov(p_i,*)$ for a string $p_i$}
\label{sec:simpPro}
Recall that the suffix path of $p_i$ visits all the overlaps of $p_i$ in decreasing order of depth (moving closer to the root). Further, an internal node $x$ on this path can be an overlap of $(p_i,p_j)$ only if $p_j$ is a descendant of $x$ (as $x$ is a prefix of its descendants). Hence, for all the internal nodes which are overlaps for $(p_i,p_j)$, the node visited first would be $ov(p_i,p_j)$. 
Given the above properties, we propose the following simple process by colouring the leaves from \textit{white} to \textit{black} as follows. 

\textit{Starting with all leaves \textit{white}, for every node $x$ on the suffix path of $p_i$, we \textit{blacken} all the \textit{white} leaves in subtree rooted at $x$, marking $x$ if any leaf was \textit{blackened} while processing $x$. }

The correctness of the above algorithm is evident as $x$'s are visited bottom up along the suffix path, and only the leaves $p_j$ in subtree of $x$ can have $x$ as an overlap of $(p_i,p_j)$. 

\begin{algorithm}
%Compute $\mathit{favPAnc}[\cdot]$ and $\mathit{favDesc}[\cdot]$\;
\lForEach{node $v\in V({\cal A})$}{$in{\cal H}[v] \gets false$} 
$in{\cal H}[root] \gets true$\;
\lForEach{node $v\in V({\cal A})$}{$\mathit{count}[v]\gets$ Number of children of $v$ in tree ${\cal A}$}

\ForEach{node $x$ in  $P$}{
    $in{\cal H}[x] \gets true$\;
    $v \gets $ Suffix link of $x$\;
    $V_m \gets \emptyset$\;

    \While{$v \neq root$} {
        \If{$\mathit{count}[\mathit{favDesc}[v]] \neq 0$} {
            $in{\cal H}[v] \gets true$\;
            $u\gets \mathit{favDesc}[v]$\;
            \While{$u\neq root$} {
               Add $u$ to $V_m$\;
               \uIf{$u\neq \mathit{favDesc}[v]$}{
               $\mathit{count}[u]\gets \mathit{count}[u]-1$\;
               \lIf{$\mathit{count}[u]>0$}{\textit{break}}
               }\lElse{$\mathit{count}[u] \gets 0$}
                $u \gets \mathit{favPAnc}[u]$\;
            }
%            $\mathit{count}[\mathit{favDesc}[v]] \gets 0$\;
%            Add $\mathit{favDesc}[v]$ to $V_m$\;
%            $u \gets \mathit{favPAnc}[v]$\;
%            \While{$true$} {
%                $\mathit{count}[u] \gets \mathit{count}[u]-1$\;
%                Add $u$ to $V_m$\;
%                \lIf{$u=root$ $\textbf{or}$ $\mathit{count}[u] \neq 0$} {\textbf{break}}
%                $u \gets \mathit{favPAnc}[u]$\;
 %           }
        }
        $v \gets$ Suffix link of $v$\;
    }
    \lForEach{node $v\in V_m$}{$\mathit{count}[v] \gets$ Number of children of $v$ in tree $\cal A$}
}
\textbf{return} $in{\cal H}$\;
   
\caption{$Compute{\cal H}({\cal A})$}
\label{alg:MarkH}
\end{algorithm}

\subsection{Optimally marking $ov(p_i,*)$ for a string $p_i$}
Clearly, visiting all descendants repeatedly is inefficient. The main idea behind the optimization is to lazily \textit{blacken} the entire subtree of each $x$ in \textit{constant} time. This is possible since future updates and queries about \textit{blackened} vertices (due to some suffix $x'$ of $x$) will only occur at vertices closer to the root than $x$, avoiding direct queries to $x$'s descendants. 

However, to limit the number of updates to \textit{constant}, we face another issue due to paths having nodes with single child (see \Cref{fig:aehog}). We circumvent it by maintaining such an entire path using its \textit{favoured} descendant, where we refer to nodes having multiple children as \textit{favoured}. Note a \textit{favoured} vertex is its own favoured descendant. The entire path having the same favoured descendant can be represented together because the number of white leaves of every node in such a path would always be the same. 

We thus maintain the following data to use the above approach efficiently. 
\begin{itemize}
    \item $\mathit{count}[x]:$ Number of child subtrees of $x$ having white leaves.
    \item $\mathit{favPAnc}[x]:$ The closest \textit{proper} ancestor of $x$ having multiple children. %\sbzkc{this needs to be proper}
    \item $\mathit{favDesc}[x]:$ The closest descendant of $x$ (not necessarily proper) having multiple children.
    \item $V_m:$ List of vertices whose $\mathit{count}[\cdot]$ is modified while visiting a suffix path.
\end{itemize}

\noindent
\textbf{Initialization.} 
Initially, since all the leaves are \textit{white}, we initialize $V_m=\emptyset$ and $\mathit{count}[x]$ as the number of children of $x$ in the tree $\cal A$. Clearly, using a single traversal over the tree ${\cal A}$, we can initialize $\mathit{count}[\cdot]$, $\mathit{favPAnc}[\cdot]$ and $\mathit{favDesc}[\cdot]$ in $O(n)$ time.

\noindent
\textbf{Processing Suffix path of $p_i$.} 
Now, starting from the suffix of $p_i$ in ${\cal A}$, we process each node $x$ on the suffix path in order. The process of \textit{blackening} all the leaves of the subtree of $x$ can be performed \textit{lazily} because the future nodes on the suffix path of $p_i$ cannot be the descendants of $x$. Assuming the data structure is correctly maintained, the $x$ is marked in $in{\cal H}$ if $\mathit{count}[x]\neq 0$, implying that leaves can be \textit{blackened} while processing $x$ (recall the simple process in \Cref{sec:simpPro}). To \textit{blacken} all the leaves in the subtree of $x$ we set the count of its representative (nodes having a single child in a path are equivalent) $\mathit{favDesc}[x]$ to \textit{zero}. 
Recall that the suffix path of $p_i$ is processed in the decreasing order of depth, hence we don't need to update the count of any other descendant of $x$. Further, the entire data structure can be updated by updating the count of the ancestors of $x$ whose $\mathit{count}$ is affected. This includes all the ancestors of $y$ having a subtree containing white leaves only among the descendants of $x$. These ancestors can be accessed by repeatedly reducing counts of \textit{favoured} proper ancestors until some ancestor has other white leaves remaining in its other subtrees. 

\noindent
\textbf{Analysis.} For each $x$ on the suffix path of $p_i\in P$, except for repeatedly updating the $\mathit{count}$ of the favoured proper ancestors, all the other operations require \textit{constant} time.  We may have to process multiple such ancestors $y$ whose $\mathit{count}$ became zero, and at most one such ancestor $y^*$ whose $\mathit{count}$ remained non-zero. We thus use an amortized analysis argument by associating twice the \textit{cost} of processing the ancestor $y^*$, which pays off for its future processing when $\mathit{count}[y^*]$ will become \textit{zero}. Since every \textit{favoured} proper ancestor $y$ initially has $\mathit{count}[y]\geq 2$, every $y$ whose $\mathit{count}[y]$ became zero was once the unique $y'^*$ while processing $x'$ on the suffix path which made $\mathit{count}[y'^*]=\mathit{count}[y]=1$, and hence paid in advance for current \textit{cost}. Thus, for each $x$ along the suffix path, we process an amortized $O(1)$ favoured proper ancestors in $O(1)$ time. Hence, computing all $ov(p_i,*)$ for a $p_i\in P$ require $O(|p_i|)$ time after initialization of data structures in $O(n)$. %Now, consider the union of the set of nodes we marked and the set of \textit{favoured} ancestors for which we updated the $\mathit{count}$. If we consider a tree formed by these nodes and compress the paths between them into edges, we would have a tree in which each unmarked node has atleast 2 children....

\subsection{Final Algorithm and Complexity}
The final algorithm requires repeating the above algorithm for each $p_i\in P$. Note that both $\mathit{favPAnc}[\cdot]$ and $\mathit{favDesc}[\cdot]$ are not updated throughout the algorithm. In order to initialize $\mathit{count}[\cdot]$ efficiently,  we simply maintain a list $V_m$ of all the nodes with modified $\mathit{count}[\cdot]$ for each $x$ on the suffix path, which can be used to restore $\mathit{count}[\cdot]$ efficiently.
%\noindent
%\subsubsection*{Analysis}
The time taken after initialization by an iteration processing $p_i$ is $O(|p_i|)$, we also have $V_m=O(|p_i|)$ and hence the data structures can be restored in $O(|p_i|)$. Thus, after initialization in $O(n)$ time, the total time required by the algorithm is $\sum_{p_i\in P} |p_i|=O(n)$, which is optimal.

%\noindent
\textbf{Remark:} The case of a string $p_i\in P$ being a prefix of another in $P$ can be handled as follows. Clearly, despite not being a leaf of the tree $\cal A$, the simple algorithm may blacken it. Hence, on including $p_i$ the effective $\mathit{count}$ of the parent of $p_i$ differs from that of its favoured descendant. Thus, each such $p_i\in P$ can be accounted for in the algorithm simply by increasing the count of its parent in $\cal A$ and treating it as \textit{favoured}.

\section{Experimental Evaluation}
\label{sec:exp}

We now evaluate the most promising algorithms for computing HOG in practice. Since the performance of these algorithms is strongly dependent on the underlying dataset, we evaluate these algorithms on both real and randomly generated datasets. Further, given the underlying application of HOG in bioinformatics, our real datasets were sources from EST datasets which have previously been used for the evaluation of algorithms solving APSP~\cite{HajRachid2015,DBLP:journals/tcs/LimP17}. Also, given the dependence of the complexity of various algorithms on $n$ and $k$, our generated random datasets help evaluate this dependence by varying over these parameters.

\subsection{Algorithms}
The following algorithms were found promising (see \Cref{sec:prevWork}) for evaluation in practice. 
\begin{enumerate}
    \item \texttt{CazauxR:} The simplest algorithm by Cazaux and Rivals\cite{DBLP:journals/ipl/CazauxR20} requiring $O(n+k^2)$ time. 
    \item \texttt{ParkCPR:} The improved algorithm by  Park et al. \cite{DBLP:conf/spire/ParkCPR20} requiring $O(n\log k)$ time. 
    \item \texttt{Khan:} The optimal algorithm by Khan~\cite{DBLP:conf/cpm/000421} requiring $O(n)$ time. 
    \item \texttt{ParkPCPR:} The optimal algorithm by Park et al.~\cite{DBLP:conf/cpm/ParkPCPR21} requiring $O(n)$ time. 
    \item \texttt{New:} Our proposed algorithm requiring $O(n)$ time.
\end{enumerate}
We followed the approach described in previous results~\cite{DBLP:conf/cpm/000421}, that it is better to first build an $\cal E$ from $\cal A$, and then use it to build $\cal H$ instead of directly building $\cal H$ using $\cal A$. Also, for a clearer comparison of relative performance, we use $\cal E$ as an input in our algorithms to avoid additional time used by all the algorithms. Recall that \texttt{ParkPCPR} cannot directly operate on $\cal E$ as an input, it additionally uses $P$ (or $\cal A$) to compute additional data structures. Finally, to store child pointers in $\cal E$ and $\cal A$ we use vectors and arrays respectively, instead of lists to improve performance (see details in \Cref{apn:impChoices}). %The code is available on Github\footnote{\url{https://github.com/renauga/hog}}.

\subsection{Performance measures and Environment}
Since all the algorithms compute HOG, which is unique for a given set of strings $P$, the only relevant parameters evaluated are the \textit{time} and \textit{memory}.  All the algorithms were implemented in C++ with g++ compiler (v9.4.0) and use $-O3$ optimization flag.  The performances were evaluated on %two different machines as the real datasets required larger memory. The random database testing was performed on an Intel(R) Xeon(R) Gold 5117 CPU having 56 cores and 64GB RAM running Linux Ubuntu 20.04.1. The real database testing was performed on 
an AMD Ryzen 9 7950X3D 16-Core processor having 32 cores and 128 GB RAM running Linux Ubuntu 20.04.6 LTS.

\subsection{Datasets}
The algorithms' performance primarily depends on the parameters $n$ and $k$. However, on further exploration, the relative size of underlying ACT, EHOG, and HOG generated were also seemingly significant. The algorithms were evaluated on the following datasets:
\begin{enumerate}
    \item \textbf{Real data.} We used the complete EST datasets\footnote{\url{http://www.citrusgenomedb.org/} and \url{http://www.uni-ulm.de/in/theo/research/seqana}} (over alphabet \{\textit{A, G, C, T}\}) of {Citrus clementina}, {Citrus sinensis}, {Citrus trifoliata2} and
{C. elegans} removing some low-quality reads (containing other characters), which were previously used for the evaluation of APSP algorithms~\cite{HajRachid2015,DBLP:journals/tcs/LimP17}. We additionally considered larger datasets on the genome assemblies of bacterial organisms\footnote{\url{http://ccb.jhu.edu/gage\_b/datasets/index.html}}, which were previously used for evaluation of genome assemblers~\cite{DBLP:journals/bioinformatics/MagocPCLSPTS13}.
See \Cref{tab:realDatasets} for a brief description of the datasets, which also shows the sizes of $\cal A, \cal E$, and $\cal H$, which also affect the performance of the algorithms (described later). 
%As described later, the performance of the algorithms also depend on the sizes of $\cal A, \cal E$, and $\cal H$, which are presented to give a complete picture.
%The total length of the strings in dataset \cite{lim_et_al:LIPIcs.SEA.2017.14} typically lies in the range of $4*10^7$ to $10^8$ with number of strings in $5*10^4$ to $2*10^5$. While GAGE-B\footnotemark[2] has relatively bigger datasets with a total length of strings in $2*10^8$ to $1.2*10^9$ with a number of strings in $2*10^6$ to $1.5*10^7$. 
\item \textbf{Random data.} Given the dependence of the algorithms on the parameters $n$ and $k$, we consider random datasets generated using two different approaches, namely,  (a) for a fixed value of $n=10^7$ and increasing values of $k$, and (b) for a fixed value of $k=10^5$, and increasing values of $n$. This allows us to evaluate the performance of the algorithms for each parameter, keeping the other as \textit{constant}. 

\item \textbf{Random Simulated Reads.} We consider a random string over $\{A,C,T,G\}$ of size $10^6$, and a set of its random substrings (reads) having $c\times$ coverage, having total length of reads $n=c\times 10^6$. For any coverage $c$, $n$ is constant, and $k$ is inversely proportional to the length of reads, which is varied for evaluation.  
\end{enumerate}

\noindent
\textbf{Remark:}
    The expected performance on random datasets is an average over $20$ test instances.

%\noindent
%\textbf{Remark:} The evaluation on random datasets is shown in \Cref{apn:randomEval}.

\begin{table}
%\vspace{-1em}
\addtolength{\tabcolsep}{2pt}
\centering
\begin{tabular}{||c c c c c c||} 
 \hline
  Dataset & $k$ & $n$ & $|\cal A|$ & $|\cal E|$ & $|\cal H|$  \\ [0.5ex] 
 \hline\hline
trifoliata & 49.2K & 36.2M & 33.8M & 86.764K & 86.709K\\
\hline
clementina & 104.6K & 91.2M & 62.7M & 201.189K & 201.023K\\
\hline
sinensis & 151.9K & 107.3M & 101.0M & 311.208K & 311.014K\\
\hline
elegans & 206.5K & 108.3M & 81.0M & 867.630K & 867.618K\\[0.5ex] 
 \hline\hline
R\_sphaeroides\_M & 1.5M & 208.1M & 173.4M & 19.716M & 19.715M\\
\hline
V\_cholerae\_M & 1.6M & 293.2M & 235.8M & 30.610M & 30.609M\\
\hline
M\_abscessus\_M & 2.0M & 332.5M & 250.8M & 25.083M & 25.082M\\
\hline
B\_cereus\_M & 2.1M & 477.9M & 417.6M & 49.119M & 49.117M\\
\hline
V\_cholerae\_H & 3.9M & 353.7M & 208.1M & 58.798M & 58.798M\\
\hline
M\_abscessus\_H & 5.5M & 492.4M & 257.5M & 62.039M & 62.038M\\
\hline
S\_aureus\_H & 7.6M & 741.9M & 363.1M & 220.500M & 220.500M\\
\hline
R\_sphaeroides\_H & 7.7M & 583.5M & 343.9M & 132.872M & 132.870M\\
\hline
X\_axonopodis\_H & 11.7M & 1.1B & 614.6M & 281.072M & 281.071M\\
\hline
B\_cereus\_H & 12.0M & 1.2B & 599.8M & 315.366M & 315.363M\\
\hline
B\_fragilis\_H & 12.3M & 1.2B & 626.3M & 363.588M & 363.588M\\
\hline
A\_hydrophila\_H & 13.1M & 1.2B & 642.9M & 376.478M & 376.477M\\
\hline
\end{tabular}
\vspace{1em}
\caption{Real datasets from EST and genome assemblies.} 
\label{tab:realDatasets} 
%\vspace{-2em}
\end{table}

\subsection{Evaluation on real datasets}

We evaluate the algorithms on the real datasets described in \Cref{tab:realDatasets}. \texttt{CazauxR} timed out ($\approx 12$hrs) on larger bacterial genome datasets. Hence, its evaluation is limited to EST datasets (see \Cref{tab:ActRelDumpSmall} in \Cref{apn:impChoices} for details), while the rest are evaluated on all the datasets (see \Cref{tab:RelPerfRealDatasets}). In both the datasets, we first compute $\cal E$ from the strings, which is used as an input to compute $\cal H$ by all the algorithms. Note that the time and memory to compute $\cal E$ is comparable to the running time of all algorithms when the dataset is large but extremely large for smaller datasets. This can be explained by comparing the relative sizes of $\cal A$ and $\cal E$ for EST ($100-300\times$) and bacterial genomes ($1.5-10\times$) as for computing $\cal E$ the input is $\cal A$, while $\cal H$ is computed using $\cal E$. We also note that the size of $\cal H$ only slightly improves $\cal E$, which indicates the significance of computing non-linear algorithms using $\cal E$ (computable linearly) instead of directly from $\cal A$. The results are shown relative to the best time and memory (actual performances are added to \Cref{apn:actP}).
 
For the EST datasets clearly, the proposed algorithm \texttt{New} performs faster compared to the other algorithm requiring minimum space as well, except in some cases where \texttt{ParkCPR} requires slightly less space. In general, as the size of the dataset increases the relative performance of \texttt{CazauxR} worsens around $~100-3000\times$ (see \Cref{tab:ActRelDumpSmall} in \Cref{apn:impChoices} for comparisons with \texttt{CazauxR}). On the other hand, the memory requirements are relatively same despite the super linear space complexity. For both \texttt{ParkCPR} and \texttt{Khan} the relative performance improves with the size of the dataset. This can be explained by the argument that as the size of the data increases the impact of the complexities of data structure involved may decrease. Another factor worth consideration is the relative memory usage of \texttt{ParkCPR} and \texttt{Khan}, where \texttt{Khan} required higher memory with respect to \texttt{ParkCPR} and \texttt{CazauxR}, despite improving the running time. 
Finally, the only algorithm unable to benefit from sharply reduced size of $\cal E$ over $\cal A$ is \texttt{ParkPCPR} which requires $60-220\times$ more memory, and time comparable only to \texttt{CazauxR} which is initially $2\times$ better but becomes upto $10\times$ worse as graph size increases. Overall, \texttt{ParkCPR} improves \texttt{CazauxR} by $20-180\times$, while \texttt{Khan} improves \texttt{ParkCPR} by $6-8\times$ using $3-5\times$ more memory, and \texttt{New} improves \texttt{Khan} by $1.5-1.7\times$ in time and by $3.5-4\times$ in memory. And \texttt{ParkPCPR} has mixed results only comparable to \texttt{CazauxR} improving it by $0.5-10\times$ in time at the expense of $60-200\times$ more memory.

\begin{table}[htbp]
%\vspace{-1em}
%\addtolength{\tabcolsep}{2pt}
\addtolength{\tabcolsep}{-2pt}
\centering
\begin{tabular}{|l|ll|ll|ll|ll|ll|}
\hline
\multirow{2}{*}{Dataset} & \multicolumn{2}{l|}{Computing $\cal E$}          & \multicolumn{2}{l|}{\texttt{New}}             & \multicolumn{2}{l|}{\texttt{Khan}}            & \multicolumn{2}{l|}{\texttt{ParkCPR}}         & \multicolumn{2}{l|}{\texttt{ParkPCPR}}         \\ \cline{2-11} 
                         & \multicolumn{1}{l|}{Mem}     & Time & \multicolumn{1}{l|}{Mem}   & Time & \multicolumn{1}{l|}{Mem}   & Time & \multicolumn{1}{l|}{Mem}   & Time & \multicolumn{1}{l|}{Mem}     & Time \\ \hline
clementina               & \multicolumn{1}{l|}{122.71x} & 61.93x  & \multicolumn{1}{l|}{1.00x} & 1.00x   & \multicolumn{1}{l|}{5.12x} & 1.69x   & \multicolumn{1}{l|}{1.06x} & 7.16x   & \multicolumn{1}{l|}{173.92x} & 202.23x \\ \hline
sinensis                 & \multicolumn{1}{l|}{132.17x} & 73.75x  & \multicolumn{1}{l|}{1.00x} & 1.00x   & \multicolumn{1}{l|}{5.05x} & 1.86x   & \multicolumn{1}{l|}{1.01x} & 7.77x   & \multicolumn{1}{l|}{176.54x} & 214.45x \\ \hline
trifoliata               & \multicolumn{1}{l|}{129.03x} & 86.28x  & \multicolumn{1}{l|}{1.00x} & 1.00x   & \multicolumn{1}{l|}{4.75x} & 1.86x   & \multicolumn{1}{l|}{1.07x} & 7.86x   & \multicolumn{1}{l|}{223.44x} & 313.62x \\ \hline
elegans                  & \multicolumn{1}{l|}{41.79x}  & 18.26x  & \multicolumn{1}{l|}{1.05x} & 1.00x   & \multicolumn{1}{l|}{3.35x} & 1.43x   & \multicolumn{1}{l|}{1.00x} & 5.47x   & \multicolumn{1}{l|}{60.60x}  & 58.39x  \\ \hline
A\_hydrophila\_H     & \multicolumn{1}{l|}{1.82x}   & 1.33x   & \multicolumn{1}{l|}{1.14x} & 1.00x   & \multicolumn{1}{l|}{1.82x} & 1.48x   & \multicolumn{1}{l|}{1.00x} & 3.98x   & \multicolumn{1}{l|}{2.68x}   & 1.84x   \\ \hline
B\_cereus\_H         & \multicolumn{1}{l|}{1.78x}   & 1.36x   & \multicolumn{1}{l|}{1.00x} & 1.00x   & \multicolumn{1}{l|}{1.56x} & 1.47x   & \multicolumn{1}{l|}{1.00x} & 3.92x   & \multicolumn{1}{l|}{2.41x}   & 1.97x   \\ \hline
B\_cereus\_M         & \multicolumn{1}{l|}{4.92x}   & 2.32x   & \multicolumn{1}{l|}{1.13x} & 1.00x   & \multicolumn{1}{l|}{1.94x} & 1.34x   & \multicolumn{1}{l|}{1.00x} & 3.51x   & \multicolumn{1}{l|}{6.81x}   & 4.90x   \\ \hline
B\_fragilis\_H       & \multicolumn{1}{l|}{1.82x}   & 1.00x   & \multicolumn{1}{l|}{1.11x} & 1.08x   & \multicolumn{1}{l|}{1.77x} & 1.34x   & \multicolumn{1}{l|}{1.00x} & 3.66x   & \multicolumn{1}{l|}{2.63x}   & 1.87x   \\ \hline
M\_abscessus\_H      & \multicolumn{1}{l|}{2.87x}   & 2.02x   & \multicolumn{1}{l|}{1.12x} & 1.00x   & \multicolumn{1}{l|}{2.23x} & 1.40x   & \multicolumn{1}{l|}{1.00x} & 4.16x   & \multicolumn{1}{l|}{4.51x}   & 3.10x   \\ \hline
M\_abscessus\_M      & \multicolumn{1}{l|}{5.59x}   & 2.55x   & \multicolumn{1}{l|}{1.12x} & 1.00x   & \multicolumn{1}{l|}{2.24x} & 1.38x   & \multicolumn{1}{l|}{1.00x} & 3.64x   & \multicolumn{1}{l|}{8.01x}   & 5.56x   \\ \hline
R\_sphaeroides\_H    & \multicolumn{1}{l|}{2.13x}   & 1.46x   & \multicolumn{1}{l|}{1.13x} & 1.00x   & \multicolumn{1}{l|}{2.07x} & 1.36x   & \multicolumn{1}{l|}{1.00x} & 4.19x   & \multicolumn{1}{l|}{3.29x}   & 2.00x   \\ \hline
R\_sphaeroides\_M    & \multicolumn{1}{l|}{4.56x}   & 1.82x   & \multicolumn{1}{l|}{1.00x} & 1.00x   & \multicolumn{1}{l|}{1.93x} & 1.34x   & \multicolumn{1}{l|}{1.00x} & 3.58x   & \multicolumn{1}{l|}{6.12x}   & 5.10x   \\ \hline
S\_aureus\_H         & \multicolumn{1}{l|}{1.69x}   & 1.00x   & \multicolumn{1}{l|}{1.14x} & 1.14x   & \multicolumn{1}{l|}{1.80x} & 1.41x   & \multicolumn{1}{l|}{1.00x} & 3.94x   & \multicolumn{1}{l|}{2.65x}   & 1.97x   \\ \hline
V\_cholerae\_H       & \multicolumn{1}{l|}{2.58x}   & 1.73x   & \multicolumn{1}{l|}{1.13x} & 1.00x   & \multicolumn{1}{l|}{2.05x} & 1.36x   & \multicolumn{1}{l|}{1.00x} & 3.57x   & \multicolumn{1}{l|}{3.98x}   & 2.70x   \\ \hline
V\_cholerae\_M       & \multicolumn{1}{l|}{4.49x}   & 2.02x   & \multicolumn{1}{l|}{1.13x} & 1.00x   & \multicolumn{1}{l|}{2.00x} & 1.35x   & \multicolumn{1}{l|}{1.00x} & 3.22x   & \multicolumn{1}{l|}{6.43x}   & 4.37x   \\ \hline
X\_axonopodis\_H     & \multicolumn{1}{l|}{1.80x}   & 1.45x   & \multicolumn{1}{l|}{1.00x} & 1.00x   & \multicolumn{1}{l|}{1.42x} & 1.36x   & \multicolumn{1}{l|}{1.00x} & 3.89x   & \multicolumn{1}{l|}{2.29x}   & 2.12x   \\ \hline
\end{tabular}
\vspace{1em}
\caption{Relative performance of algorithms on real datasets (actual performance in \cref{tab:ActPerfRealDatasets})}
\label{tab:RelPerfRealDatasets}
%\vspace{-3em}
\end{table}

For the bacterial genome datasets, again \texttt{New} performs faster than other algorithms requiring less memory as well. Since \texttt{CazauxR} requires $>12$hrs it is improved by \texttt{ParkCPR} by at least $100\times$. For both \texttt{ParkCPR} and \texttt{Khan}, the relative performance (time and memory) again improves with the size of the dataset, where \texttt{Khan} still requires more memory despite being closer to \texttt{ParkCPR}.
Again, \texttt{ParkPCPR} requires $2-8\times$ more memory being unable to exploit computation through $\cal E$, and requires mixed time with respect to \texttt{ParkCPR} ranging from $0.5-1.5\times$. The only reason why its performance is comparable with faster algorithms is that the difference between $\cal A$ and $\cal E$ is way smaller than EST datasets. 
Overall, \texttt{Khan} improves \texttt{ParkCPR} by $1.3-2\times$ at the expense of slightly more memory, and \texttt{New} improves \texttt{Khan} by $1.3-1.5\times$ also improving the memory. 

To summarize minimum time is always taken by \texttt{New}  while minimum memory is often required by \texttt{ParkCPR} where \texttt{New} may be equivalent or up to $15\%$ higher.

\begin{observation}
On real EST and bacterial genome datasets evaluated  we have:
\begin{enumerate}
    \item \texttt{ParkCPR} improves \texttt{CazauxR} by $20$-$180\times$, \texttt{Khan} improves \texttt{ParkCPR} by $1.3$-$8\times$ using $1.3$-$5\times$ more memory, and \texttt{New} improves \texttt{Khan} in time by $1.3$-$1.7\times$ and in memory by $1.4$-$5\times$.
    \item With an increase in size of the dataset, the relative performances of all algorithms improve. 
    \item Though theoretically optimal \texttt{ParkPCPR} in unable to exploit computation using $\cal E$ making its performance strongly dependent on ratio of $|\cal A|$ and $|\cal E|$.
\end{enumerate}
\end{observation}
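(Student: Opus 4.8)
The plan is to treat this observation as a compact summary of the measured ratios in \Cref{tab:est} (EST data) and \Cref{tab:GAcomparison} (bacterial genome data), so that its justification amounts to reading the relevant entries, forming the pairwise ratios between consecutive algorithms, and checking that each lies in the asserted interval. Since every timing and memory figure in both tables is already normalised to the best performer \texttt{New} (reported as $1\times$), the improvement of one algorithm over another is obtained directly as the quotient of their normalised entries, which keeps the bookkeeping elementary and avoids reintroducing the raw absolute numbers deferred to \Cref{apn:actP}.

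For the first part I would work through the four asserted ranges one at a time. First, for \texttt{ParkCPR} over \texttt{CazauxR} I would divide the \texttt{CazauxR} time column by the \texttt{ParkCPR} time column across the four EST rows (from $536.43/24.07\approx 22$ up to $3264.26/17.95\approx 182$), confirming the $20$--$180\times$ window; the bacterial rows are excluded here because \texttt{CazauxR} timed out. Next, for \texttt{Khan} over \texttt{ParkCPR} I would take the same quotient in both tables, noting that the EST rows supply the large end ($\approx 6$--$8\times$) and the bacterial rows the small end ($\approx 1.3$--$2\times$), whose union is the claimed $1.3$--$8\times$; the accompanying memory overhead is read off as the ratio of the two memory columns. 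Finally, for \texttt{New} over \texttt{Khan} I would divide the \texttt{Khan} time column (which sits near $2.3$--$2.5\times$ on bacterial data and $3.7$--$4\times$ on EST, giving $2.3$--$4\times$) and separately collect the memory columns of all three competitors relative to \texttt{New}'s $1\times$, whose overall spread yields the $1.5$--$6\times$ memory advantage.

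For the second part I would reorder the rows of each table by increasing dataset size (by $n$, or by the combined $(n,k)$ footprint) and exhibit that the normalised time and memory ratios of \texttt{CazauxR}, \texttt{ParkCPR} and \texttt{Khan} relative to \texttt{New} are (near-)monotonically increasing, i.e. the gap to the proposed algorithm widens as the input grows. I would pair this with the qualitative explanation already used in the evaluation text, namely that the constant factors hidden in the heavier auxiliary structures (segment trees for \texttt{ParkCPR}, stacks and lists for \texttt{Khan}) are amortised away on larger inputs.

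The main obstacle is that this is an \emph{empirical} rather than a deductive statement, so the argument can only certify the ranges on the finite collection of datasets tested and cannot exclude inputs outside this sample that violate a bound. Two concrete difficulties follow. First, reconciling the rather different EST and bacterial regimes into single stated intervals: the extremes of each range are realised in different tables, so I must be explicit about which dataset attains each endpoint rather than implying every dataset meets every bound. Second, the monotonicity claim of part~2 must be defended against the mild non-monotonicities visible in the raw ratios (for instance the dip at \texttt{S\_aureus\_H}), which I would handle by reporting the \emph{overall} trend across sizes rather than asserting strict monotonicity at every consecutive pair of rows.
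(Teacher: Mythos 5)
Your handling of part 1 matches the paper's own justification: the observation is precisely a summary of the normalised ratios in \Cref{tab:est} and \Cref{tab:GAcomparison}, with the endpoints of each stated interval realised in different tables (the EST data supplies the large ends of the \texttt{Khan}-over-\texttt{ParkCPR} and \texttt{New}-over-\texttt{Khan} ranges, the bacterial data the small ends), and your caveat that an empirical claim is only certified on the finite set of tested datasets is appropriate.

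Your plan for part 2, however, inverts the direction of the trend, and this is a genuine error rather than a presentational choice. The paper's claim that relative performances improve with dataset size means that the normalised time and memory ratios of \texttt{ParkCPR} and \texttt{Khan} (relative to \texttt{New}) \emph{decrease} towards $1$ as the input grows: on EST, \texttt{ParkCPR} falls from about $24\times$ (trifoliata) to about $18\times$ (elegans) and \texttt{Khan} from about $4.0\times$ to $3.7\times$; on the bacterial data \texttt{ParkCPR} drops from roughly $4.4\times$ to $3.6\times$ and \texttt{Khan} from about $2.5\times$ to $2.4\times$. You instead propose to exhibit that these ratios are (near-)monotonically \emph{increasing}, i.e.\ that the gap to \texttt{New} widens, which is contradicted by the very tables you would read, and is also inconsistent with the amortisation explanation you give in the same sentence: if the constant factors of the heavier data structures are amortised away on large inputs, the competitors must get relatively \emph{closer} to \texttt{New}, not farther. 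The only algorithm whose ratio does grow with size is \texttt{CazauxR} (from $\approx 536\times$ to $\approx 3264\times$ on EST), and the paper explicitly describes this as its relative performance \emph{worsening}; so under your reading, part 2 would be false for exactly the algorithms it is meant to cover. Your instinct to defend only an overall trend rather than strict monotonicity (e.g.\ the dip at S\_aureus\_H) is sound and consistent with the paper, but the trend you certify must point the other way.
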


\subsection{Evaluation on random datasets}
For the random datasets, we measured the average running time of the algorithms over several generated strings having the same parameters of $n$ and $k$. The results of the experiments showing the variation of the running time with respect to $k$ and $n$ are shown in \Cref{fig:randomExp}. 

\begin{figure}
%\vspace{-2em}
    \centering
    \includegraphics[scale=0.43]{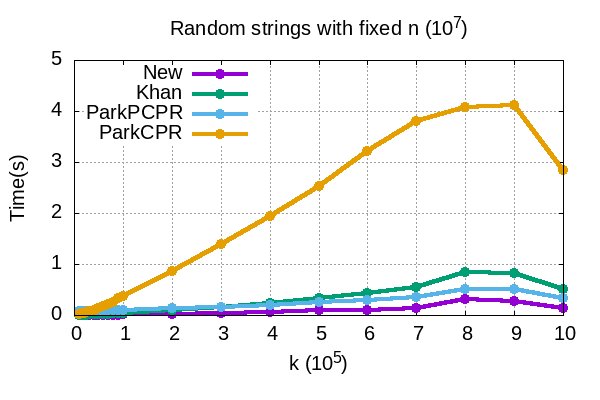}
    \includegraphics[scale=0.43]{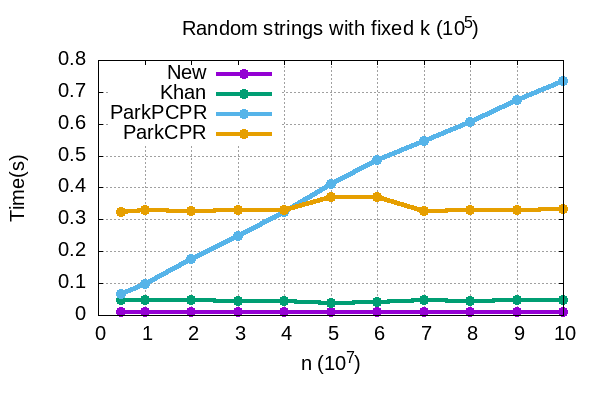}
\vspace{-1em}
    \caption{Variation of running time of the algorithms on the two types of random datasets}
    \label{fig:randomExp}
%\vspace{-1em}
\end{figure}

The performance of the algorithms follow similar pattern as real datasets where \texttt{New}, \texttt{Khan} and \texttt{ParkCPR} are comparable while \texttt{CazauxR} is seemingly impractical. The performance of \texttt{ParkCPR} is improved by \texttt{Khan} by around $4\times$, while \texttt{Khan} is improved by \texttt{New} by $~3\times$. Again performance of \texttt{ParkPCPR} is mixed performing very well for fixed $k$ but very poorly for fixed $n$. This can again be attributed to the dependence of its performance on $|\cal A|$ instead of $|\cal E|$.

For a fixed value of $n=10^7$, as expected \texttt{CazauxR} increases quadratically with $k$. The other algorithms linearly increase up to $k\approx 8\times 10^5$, then decrease. Recall that \texttt{ParkCPR} had a logarithmic dependence on $k$, whereas \texttt{Khan} and \texttt{New} are independent of $k$. For such a large value of $k$ the lengths of individual string becomes closer to $10$, leading to smaller overlaps and hence smaller size of $\cal A, \cal E$ and $\cal H$. Since the algorithms use $\cal A,E$ as input, their size would greatly affect the performance of the algorithms. Also, due to this apparent decrease in $\cal A$, \texttt{ParkPCPR} performs $\approx 2\times$ better than \texttt{Khan} but $\approx 1.5\times$ slower than \texttt{New}.

For a fixed value of $k=10^5$, evaluating \texttt{CazauxR} is impractical due to a direct dependence on $k^2$, as was evident from the first experiment. For the remaining algorithms, we surprisingly see nearly \textit{constant} performance irrespective of $n$, which is very surprising because all the algorithms are linearly dependent on $n$. %Again, we believe the size of $\cal E$ shall play a significant role in explaining this performance. 
This is because, for completely random strings, the overlap is expected to be very small. Given that we have only four possible characters in the string $k$ uniformly random strings may have $\log k$ sized overlap which is independent of $n$. The small size overlaps result in the smaller size of $\cal E$. Since $k$ is the dominant factor in the size of $\cal E$, it possibly explains the near-constant performance of all the algorithms with the variation of $n$. However, this is not true for $\cal A$ which stores all the internal nodes not just corresponding to overlaps, which explains the linear dependence of \texttt{ParkPCPR} on $n$.

We verify the inference drawn from the above experiments by evaluating the variation in the size of $\cal E$ and $\cal A$ for both experiments (see \Cref{fig:randomExpEHOG}). Since the variation of $\cal E$ closely matches the performance of \texttt{ParkCPR}, \texttt{Khan} and \texttt{New}, it verifies our inferences that the running times of these algorithms are strongly dependent on $|\cal E|$ as compared to $n$ or $k$. 

\begin{figure}[t]
    \centering
    %\vspace{-2em}
    %\includegraphics[scale=0.43]{}
    %\includegraphics[scale=0.43]{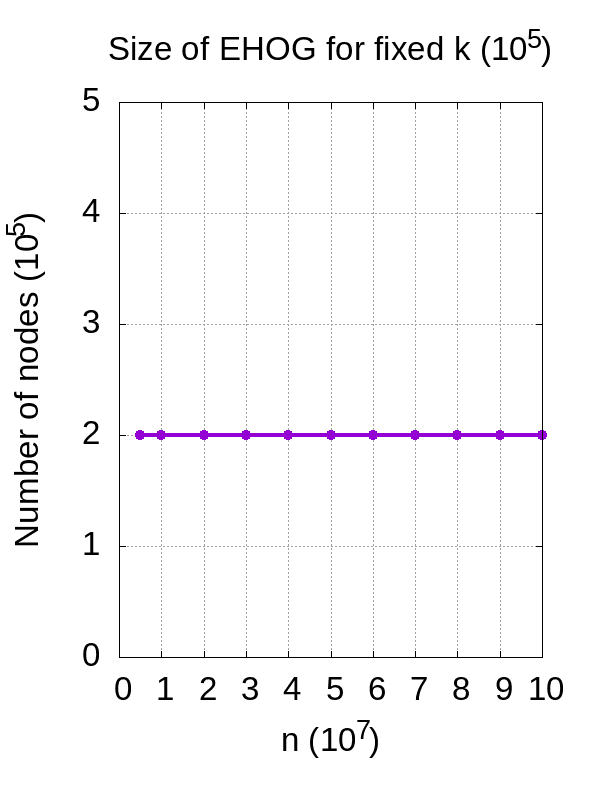}
    \includegraphics[scale=0.28]{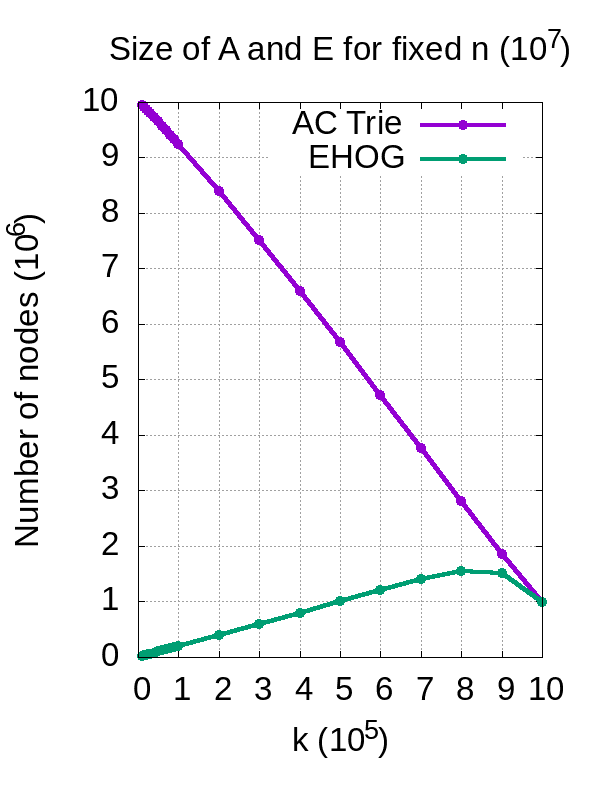}
     \includegraphics[scale=0.28]{graphics/random_ehog_k.png}
    \includegraphics[scale=0.28]{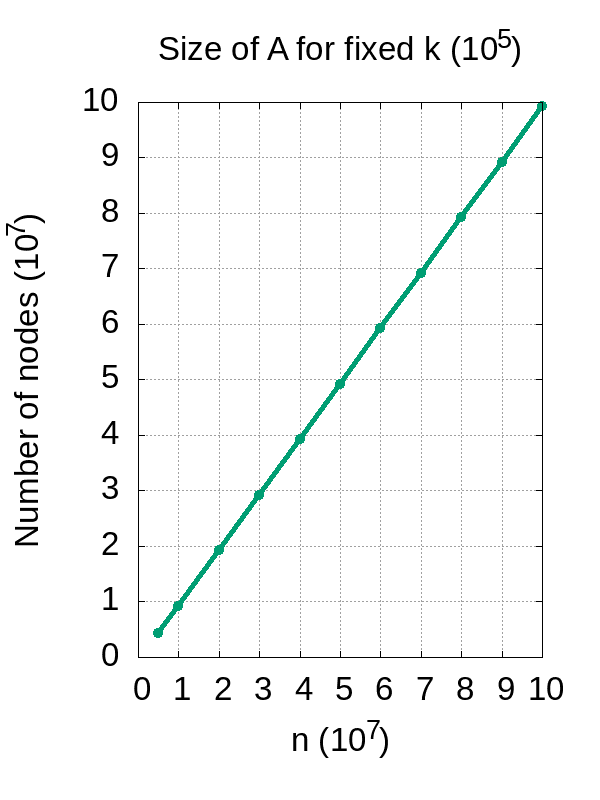}
    \vspace{-1em}
    \caption{Variation of the size of EHOG $\cal E$ and ACT $\cal A$ on the two types of random datasets. The left plot shows the variation of $k$, while mid and right for the variation of $n$.}
    \label{fig:randomExpEHOG}
 %  \vspace{-2em}
\end{figure}

\begin{comment}    
\begin{figure}
    \centering
    \includegraphics[scale=0.43]{}
    \includegraphics[scale=0.43]{graphics/random_aho_k.png}
    \caption{Variation of size of AC Trie $|\cal A|$ on the two types of random datasets}
    \label{fig:randomExpEHOG}
\end{figure}
\end{comment}

\begin{comment}
\begin{figure}
    \centering
    \includegraphics[scale=0.43]{graphics/random_wrt_ehog_k.png}
    \includegraphics[scale=0.43]{graphics/random_wrt_ehog_n.png}
    \caption{Variation of running times of the algorithms with change in $|\cal E|$ in two random datasets}
    \label{fig:randomTvE}
\end{figure}
\end{comment}

\begin{observation}
On random datasets evaluated under variation of $n$ and $k$ we have:
\begin{enumerate}
    \item \texttt{Khan} improves \texttt{ParkCPR} $4\times$, while \texttt{New} improves \texttt{Khan} by $~3\times$. 
    \item The performance of \texttt{ParkCPR}, \texttt{Khan} and \texttt{New} are strongly dependent on $|\cal E|$, while that of \texttt{ParkPCPR} is strongly dependent on $|\cal A|$, instead of $n,k$. 
\end{enumerate}
\end{observation}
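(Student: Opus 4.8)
The plan is to establish the two claims of the Observation not as an analytic theorem but as an empirical conclusion supported by controlled experiments together with a heuristic complexity argument that explains the measured curves. Since the first claim is purely quantitative, I would establish it directly from the timing data of \Cref{fig:randomExp}: for each fixed pair $(n,k)$ I would generate several independent random instances, record the average running time of \texttt{ParkCPR}, \texttt{Khan} and \texttt{New}, and report the pairwise ratios. The speedup ranges ($1.5$--$2.5\times$ for \texttt{Khan} over \texttt{ParkCPR} and $\approx 5\times$ for \texttt{New} over \texttt{Khan}) then follow by reading off the envelopes of these ratio curves across both sweeps (fixed $n$, varying $k$; and fixed $k$, varying $n$). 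Averaging over repeated instances is what makes these ratios meaningful rather than artefacts of a single random draw.

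The second claim is the substantive one and requires a two-step argument: first a structural explanation of why $|{\cal E}|$ should govern the running time, and then an empirical verification. For the explanation I would note that all three algorithms first build ${\cal E}$ from ${\cal A}$ and then operate on ${\cal E}$, so their work is dominated by the size of ${\cal E}$ rather than by $n$ or $k$ in isolation. I would then argue that for uniformly random strings over a four-letter alphabet the expected longest overlap between two strings is $O(\log k)$ and is essentially independent of $n$; since ${\cal E}$ collects the (non-maximal) overlaps, its size is controlled by $k$ and by this short expected overlap length, and hence is nearly insensitive to $n$. This immediately predicts the two surprising features of \Cref{fig:randomExp}: near-constant running time as $n$ grows at fixed $k$, and a non-monotone profile in $k$ (linear growth followed by a decrease once individual string lengths shrink toward $\approx 10$ and overlaps---and thus $|{\cal E}|$---collapse).

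To turn this explanation into evidence I would measure $|{\cal E}|$ itself across both sweeps and overlay it with the timing curves (\Cref{fig:randomExpEHOG}). The claim that performance is strongly dependent on $|{\cal E}|$ instead of $n,k$ is then verified by showing that the shape of the $|{\cal E}|$ curve matches the shape of the time curves for \texttt{ParkCPR}, \texttt{Khan} and \texttt{New} in both experiments, including the non-monotonicity in $k$ and the flatness in $n$. The main obstacle I anticipate is justifying the counterintuitive flatness in $n$: all three algorithms are provably $O(n)$, so a reader may expect the running time to grow with $n$. Resolving this cleanly requires making the heuristic overlap argument convincing---namely that at fixed $k$ the dominant cost is the construction and traversal of ${\cal E}$, whose size saturates because random overlaps stay short while $n$ increases only the lengths of already-short-overlapping strings. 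I would treat the matching of the $|{\cal E}|$ curve to the timing curves in \Cref{fig:randomExpEHOG} as the decisive check, since it isolates $|{\cal E}|$ as the operative parameter and thereby discharges the apparent contradiction with the linear-time bounds.
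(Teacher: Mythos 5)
Your proposal is correct and follows essentially the same route as the paper: the speedup ranges are read off from averaged timings over repeated random instances (\Cref{fig:randomExp}), and the dependence on $|{\cal E}|$ is argued via the expected $O(\log k)$ overlap of random strings over a four-letter alphabet and then verified by overlaying the measured $|{\cal E}|$ curves with the timing curves (\Cref{fig:randomExpEHOG}). No substantive difference from the paper's own justification.
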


\subsubsection{Evaluation on simulated random reads}

 We consider a random string of length $10^6$, and evaluate the performance of the algorithms (see~\Cref{fig:randomSim}) for varying length of snapshots (reads) for different coverage ($30\times, 50\times$ and $100\times$). As described earlier, for a given coverage $n$ is constant, and $k$ is inversely proportional to the length of reads,  which are varied for evaluation. The results are shown in \Cref{fig:randomSim}.

 Again, we see a similar pattern as other datasets. \texttt{New} is $~2\times$ better than \texttt{Khan} while \texttt{Khan} is $~2.5\times$ better than \texttt{ParkCPR}. Compared to these algorithms, \texttt{CazauxR} seems impractical. Some of the running times of \texttt{CazauxR} which were too high are not shown in \Cref{fig:randomSim} to improve the readability of the graph. Again, \texttt{ParkPCPR} is comparable to \texttt{New}, \texttt{Khan} and \texttt{ParkCPR}, but its performance depends on the sizes of $|\cal A|$ and  $|\cal E|$, which in turn depend on coverage and snapshot length.
 
 For a fixed snapshot length, as the coverage increases, $k$ and $n$ both increase. So, the size of $\cal A$ and $\cal E$ increases (see \Cref{fig:randomSimAho}). This causes the running time of all the algorithms to increase as well. So, the behavior is as expected.  
 For a fixed coverage, as the length of the snapshots increases, $k$ decreases, and $n$ stays constant. As, the total length of strings stays constant but the average length increases, the size of $\cal A$ increases very slowly as it weakly depends on the average string length. Also, the size of $\cal E$ sharply decreases because it depends quadratically on $k$. Thus, the running time of all algorithms, except \texttt{ParkPCPR}, decreases as they depend strongly on the size of $\cal E$. The running time of \texttt{ParkPCPR}, on the other hand, also decreases at first, but this decrease is slower compared to other algorithms, as \texttt{ParkPCPR} also depends on the size of $\cal A$, which is increasing. These inferences were verified by evaluating the variations in sizes of $\cal A$ and $\cal E$(see \Cref{fig:randomSimAho}). As they match with the variations in running times, our inferences are verified.

\begin{figure}[t]
    \centering
%    \vspace{-2em}
    \includegraphics[scale=0.45]{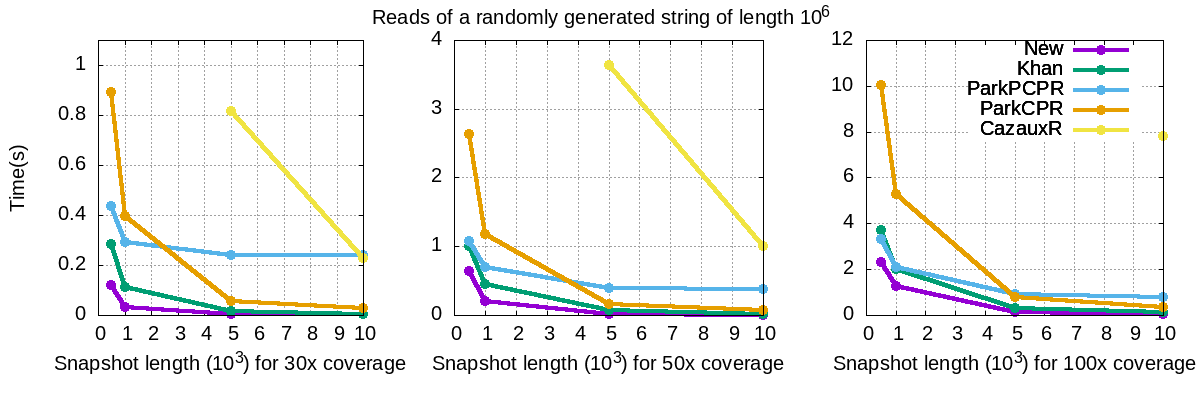}
    \vspace{-1em}
    \caption{Variation of running time for random simulated reads data.}        
%    \vspace{-2em}
    \label{fig:randomSim}
\end{figure}

\begin{figure}[h!]
    \centering
%    \vspace{-2em}
    \includegraphics[scale=0.45]{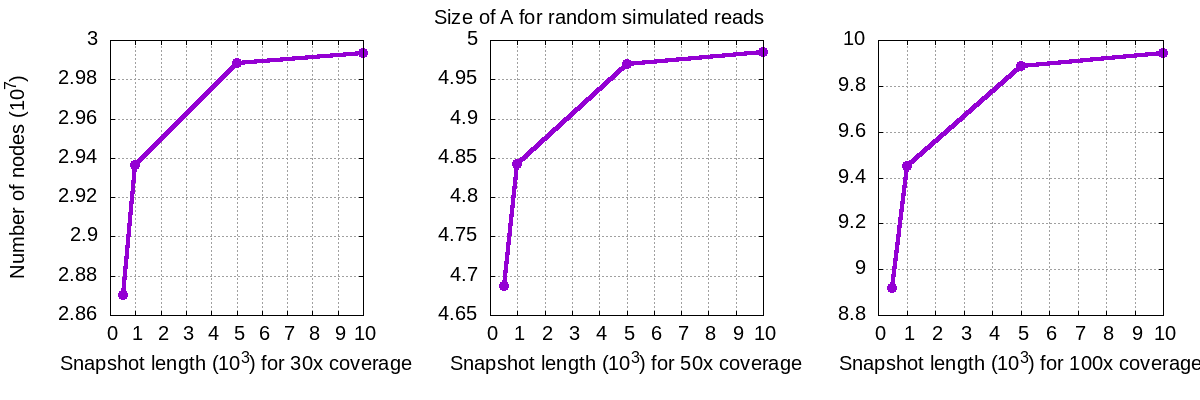}
    \includegraphics[scale=0.45]{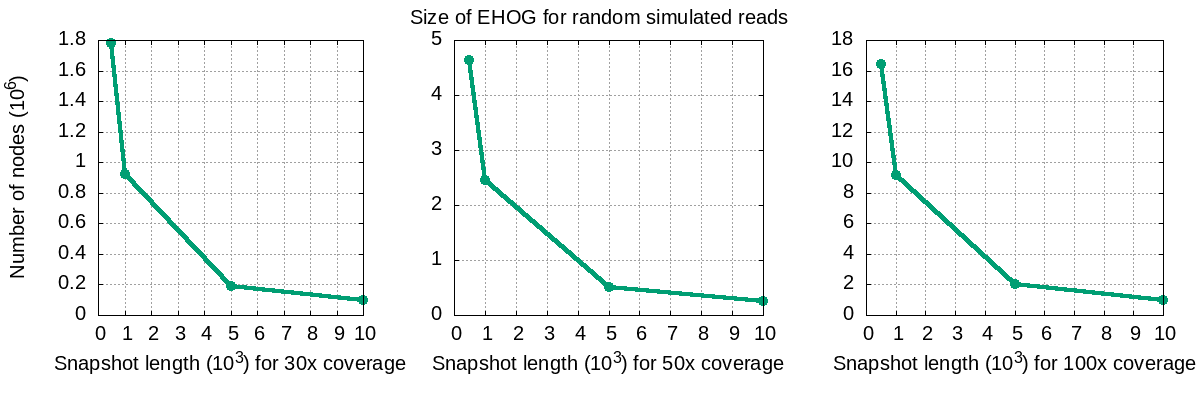}
  %  \vspace{-1em}
    \caption{Variation of size of $\cal A$ and $\cal E$ for random simulated reads data.}
%    \vspace{-2em}
    \label{fig:randomSimAho}
\end{figure}

%\begin{figure}[b]
%    \centering
%    \vspace{-2em}
%    \includegraphics[scale=0.43]{graphics/read_random_M_ehog.png}
%    \vspace{-1em}
%    \caption{Variation of size of $|\cal E|$ for random simulated reads data.}
%    \vspace{-2em}
%    \label{fig:randomSimEHOG}
%\end{figure}

 \begin{observation}
On simulated random reads evaluated under variation of length and coverage, we have:
\begin{enumerate}
    \item \texttt{Khan} improves \texttt{ParkCPR} $~2\times$, while \texttt{New} improves \texttt{Khan} by $~2.5\times$. 
    \item The performance of \texttt{ParkCPR}, \texttt{Khan} and \texttt{New} are strongly dependent on $|\cal E|$, while that of \texttt{ParkPCPR} is strongly dependent on $|\cal A|$ as well as $|\cal E|$. 
\end{enumerate}
\end{observation}

 \newpage

\section{Applications to String Problems}
\label{sec:apps}

We consider variants of the All Pairs Suffix Prefix problem (APSP) which were recently studied in~\cite{DBLP:conf/cpm/LoukidesPTZ23}, giving state-of-the-art theoretical algorithms. 
However, they require complex black-box algorithms/data structures for their implementation, making them impractical for use in practice. We thus formulate simple solutions for the studied problems using HOG $\cal H$. 

Given a dictionary $P$ of $k$ strings $\{p_1, p_2, \cdots, p_k\}$ of total length $n$. We assume the strings in $P$ are lexicographically sorted; else, we sort them in $O(||P||)$ using ACT $\cal A$. We compute $\cal H$ and build the following data structure defined on each node $v\in \cal H$
\BlankLine
$SubTreeMinMax[v]:$ Minimum and Maximum index of strings in $P$ with prefix $v$
\BlankLine
Since the strings in $P$ are lexicographically sorted, each internal node in $x\in \cal H$ has a range of indices $[min,max]$ such that $\{p_{min},\cdots,p_{max}\}$ have $x$ as its prefix. Note that these strings are the leaves of the subtree rooted at $x$ in $\cal A$ or $\cal H$. Also, note that the root ($\epsilon$ or empty string) has this range $[1,n]$ having all the leaves. We can preprocess $\cal H$ to compute and store these ranges for all the internal nodes in linear time using linear space.

\subsection{Algorithms}
The studied problems can be categorized as

%In this paper, we also consider the variations of the classical all-pairs suffix-prefix problem (APSP) on a set of strings $P$ described by Loukides et al.~\cite{DBLP:conf/cpm/LoukidesPTZ23}. 

%In the all-pairs suffix-prefix (APSP) problem, we are given a dictionary $P$ of $k$ strings, $S_1$, . . . , $S_k$, of total length $n$, and we are asked to find the length $SPL_{i,j}$ of the longest string that is both a suffix of $Si$ and a prefix of $Sj$ , for all $i, j \in [1, k]$. APSP is a classic problem in string algorithms with many applications in bioinformatics. 

%We consider the following types of queries:
\begin{enumerate}
    \item \textbf{Paired String query:} 
    $OneToOne(i, j)$: report the string $ov(p_i,p_j)$
    \BlankLine
To compute the query, we start with $p_i$ and iterate over each node $v$ in its suffix path, evaluating whether $j\in [min,max]$ of $SubTreeMinMax[v]$. We simply report the first node for which the above property holds true. Since the range for root is $[1,k]$, the algorithm will always terminate with a solution. Also, we process each suffix $v$ of $p_i$ in decreasing order of length, the first $v$ that is also a prefix of $p_j$ is $ov(p_i,p_j)$. 
\BlankLine
Now, the number of nodes in the suffix path of a string is bounded by %$k$ \sbzk{(why?)} and 
the length of the string $p_i$. Using precomputed data structure $SubTreeMinMax[v]$, each evaluation for $v$ takes $O(1)$ time, resulting in an overall complexity of $O(|p_i|)$. Also, note that the internal nodes in $\cal H$ are only the maximum overlaps between $k$ strings, limiting the internal nodes (and hence suffix path) to $O(k^2)$. This results in a complexity of $O(\min\{k^2,|p_i|\})$. Also, on average the length  $|p_i|$ is $O(n/k)$, so average cost of the query over all strings is $O(\min\{k^2,\frac{n}{k}\})$.

 %$O(min(k, |p_i|))$. \\
\BlankLine
\textbf{Note:} The above datastructure can also be computed for ACT $\cal A$ (or EHOG $\cal E$) and the corresponding algorithm will give the complexity $O(|p_i|)$. However, in practice, the difference in the number of nodes of $\cal A, E, H$ significantly impacts the performance even more than the theoretical limit of $O(k^2)$ for $\cal H$.
\BlankLine

%    For given $i$ and $j$ we need to report  $ov(p_i,p_j)$, where $p_i, p_j \in P$. Each internal node in $\cal H$ has 2 values associated with it - $start$ and $end$. $start$ indicates the minimum index in $[1,k]$ such that the node is a prefix of $p_{start}$. ${end}$ denotes the maximum index of the same.
%\BlankLine
%We first obtain the node corresponding to the $p_i$, then we iterate over the suffix path of this node, and the first node in which we find $j$ in between $[start, end]$ is the deepest node which is the prefix of string $p_j$ and also a suffix of string $p_i$ so the length corresponding to this node is the $ov(p_i,p_j)$. The number of nodes in the suffix path of a string is bounded by $k$ and the length of the string $p_i$. So, the overall complexity of this algorithm is $O(min(k, |p_i|)$.
%\BlankLine
%We can further improve the time complexity with HOG by using advanced data structure as used in Loukides et al.\cite{DBLP:conf/cpm/LoukidesPTZ23}. The data structure by Loukides et al. for answering one-to-one query constructs in $O(n\log{log{n}})$ time while the HOG construct requires $O(n)$ time. The algorithm by Loukides et al. uses various black box algorithms, which are very difficult to implement, and ours are very easy to implement.\\

    \item \textbf{Multi String Queries:} \\
    $OneToAll(i)$: report $ov(p_i,p_j)$ for every $p_j \in P$.\\
    $Report(i, l)$: report all $j\in[1,k]$ where $ov(p_i,p_j) \geq l$ for an integer $l \geq 0$.\\
    $Count(i, l)$: report the count of all $j\in [1,k]$  %given by $Report(i,l)$.
    where $ov(p_i,p_j) \geq l$ for an integer $l \geq 0$.\\
    $Top(i, c)$: report all $j \in [1, k]$ corresponding to any $c$ highest values of $|ov(p_i,p_j)|$.  
 \BlankLine   
We first describe $OnetoAll(i)$ query where we need to find $ov(p_i,p_j)$, for every $p_j \in P$ for given $p_i$. We traverse each node $v$ in the suffix path of $p_i$, marking $v$ as the answer $ans[j]$ for each string $p_j$  in the subtree of $v$ (hence having prefix $v$), if not marked previously. Clearly, since we process suffix path in decreasing order of length of $v$, we store the maximum overlaps in $ans[\cdot]$ with $p_i$ (recall the simple process of blackening leaves described in \Cref{sec:simpPro}). 
\BlankLine
To perform this task efficiently, we store the next potentially unmarked leaf in $next[j]$ if $j$ is marked; else, we have $next[j]=j$. Hence, while processing $v$ on the suffix path, in case $next[j]=j$ for some leaf, we store $ans[j]=v$ and move to $next[j]$. Further, note that while processing $v$, none of the leaves in its subtree will be left unmarked and hence can be skipped while processing further nodes on the suffix path after $v$. This is achieved by updating $next[min]=max+1$ where $[min,max]$ is $SubTreeMinMax[v]$. This is sufficient to skip the above range in the future because any node $x$ on the suffix path trying to mark a leaf in the above range would surely be an ancestor of $v$, hence will first attempt to mark $min$ in this range thereby skipping processing up to $max+1$.
\BlankLine
Again, the number of nodes in the suffix path is bounded by $O(|p_i|)$, %$O(k)$, 
and we use total $O(k+|p_i|)$ time to mark and skip using $next[\cdot]$. For each $v$ in the suffix path, it visits $next[i]$ of either the minimum index, or of a leaf that will never be processed again (skipped in the future). Thus, $next[\cdot]$ is processed once for each leaf and once for each node in the suffix path, requiring $O(k+|p_i|)$ time. Again, the length of the suffix path is also bounded by the number of internal nodes in $\cal H$, i.e., $O(k^2)$ resulting in the complexity of $O(k+\min\{k^2,|p_i|\})$. And the average cost of the query across all strings is $O(k+\min\{k^2,\frac{n}{k}\})$.

%$O(k)$ time. 
%\BlankLine
%\textbf{Note:} The complexity of our algorithm matches the state-of-the-art result~\cite{DBLP:conf/cpm/LoukidesPTZ23} though is significantly simpler and easier to implement.
\BlankLine
The queries $Report(i, l), Count(i,l)$, and $Top(i,c)$ terminate the above process early when the length of string $v$ on suffix path becomes smaller than $l$, or number of reported strings exceed $c$. Thus, the above queries have the same time complexity in the worst case.

\end{enumerate}

\begin{table}
\centering
\addtolength{\tabcolsep}{2pt}
\vspace{-1em}
\begin{tabular}{||c c c c||} 
 \hline
Problem & Classical~\cite{DBLP:journals/siamcomp/KnuthMP77} & State of the art~\cite{DBLP:conf/cpm/LoukidesPTZ23} & Using HOG  \\ [0.5ex] 
 \hline\hline
$OneToOne(i, j)$ & $O(|p_i| + |p_j|)$ & $O({\log{\log{k}}})$ & $O(\min\{k^2,|p_i|\})$ \\ %$O(min(k, |p_i|)$\\ 
$OneToAll(i)$ & $O(k|p_i|+n)$ & $O(k)$  & $O(k+\min\{k^2,|p_i|\})$ \\ %$ O(k)$ \\
$Report(i, l)$ & $O(k|p_i|+n)$ & $O(\log{n}/\log{\log{n}} + output)$ & $O(k+\min\{k^2,|p_i|\})$ \\ %$O(k)$ \\ 
$Count(i, l)$ &  $O(k|p_i|+n)$ & $O(\log{n}/\log{\log{n}})$  & $O(k+\min\{k^2,|p_i|\})$ \\ %$O(k)$\\
$Top(i, c)$ & $O(k|p_i|+n)$ & $O(\log^2{n}/\log{\log{n}} + output)$& $O(k+\min\{k^2,|p_i|\})$ \\ %$O(k)$ \\ 
 \hline
\end{tabular}
\vspace{1em}
\caption{Comparison of the classical and state-of-the-art algorithms with the HOG algorithms.}
\vspace{-3em}
 \label{tab:apspAlgo2}
\end{table}

\subsection{Classical Algorithm}
Note that each overlap query $ov(p_i,p_j)$ can be naively computed by comparing every suffix of $p_i$ with $p_j$, requiring $O(|p_i|\times |p_j|)$ time. However, using the classical KMP algorithm~\cite{DBLP:journals/siamcomp/KnuthMP77} with pattern $p_j$ and text $p_i$ it can be answered in $O(|p_i|+|p_j|)$. For multiple string queries, we repeat for all $p_j\in P$, requiring total $O(\sum_{j=1}^{k} |p_i|+|p_j|)= O(k|p_i|+n)$ time. Note that the average cost of the query over all possible strings $p_i$ is $O(k\times \frac{n}{k}+n)=O(n)$.

\subsection{Comparisons with the state-of-the-art}
%We have achieved the same bound as of \cite{DBLP:conf/cpm/LoukidesPTZ23} for one-to-all query and our algorithm is easier to implement than the state-of-the-art algorithm. Also, for other queries, there is still room for improvement. 
Current algorithms with HOG are easier to implement than \cite{DBLP:conf/cpm/LoukidesPTZ23}, though clearly, it has a lot of scope for improvement. Also, we are able to answer multiple types of queries with only HOG, while in \cite{DBLP:conf/cpm/LoukidesPTZ23} they have used different data structures for different types of queries which makes them harder to implement and will practically take more time for construction. %Also, for construction of data structure in one-to-one query use $O(n\log{\log{n}})$ time while we are constructing HOG in $O(n)$ time. 
\Cref{tab:apspAlgo2} summarizes the complexities of the proposed algorithms for the studied problems and the pseduocodes of our algorithms are given in \Cref{apn:pseudocodes}.

%\sbzk{What is the value of $l$ and $c$ used?}

\subsection{Experimental Evaluation}

%Despite failing to match the state-of-the-art algorithms~\cite{DBLP:conf/cpm/LoukidesPTZ23} theoretically by a large margin, our algorithms performed well in practice (see \Cref{tab:realQueriesItalian}). We use $l=10$ and random values of $c$ for evaluation. To answer different queries our algorithms required $0.002-100$ milliseconds for a data set having around a billion characters. Due to theoretical nature of the state-of-the-art~\cite{DBLP:conf/cpm/LoukidesPTZ23} requiring complex black box algorithms and data structures, its experimental evaluation is beyond the scope of this paper.  

Despite failing to match the state-of-the-art algorithms~\cite{DBLP:conf/cpm/LoukidesPTZ23} theoretically by a large margin, our algorithms performed well in practice (see \Cref{tab:realQueriesItalian}). To answer paired string queries our algorithm required $0.002-0.010~ms$ and for multi string queries $0.5-100~ms$ for a data set having around a billion characters. For comparison (see \Cref{tab:realQueriesKMP} in \Cref{apn:actP}), the classical KMP algorithm answers paired string queries $2-20\times$ faster ($0.0005-0.0009~ms$), where one-to-one queries are tailor-made for KMP algorithm. However, the remaining multi string queries $18-1300\times$ slower ($600-6500~ms$). 
Among the multi string queries one-to-all has the least improvement ($18-106\times$), and count has the most improvement ($199-1300\times$), as KMP is unable to exploit any structural information having to compute all overlaps despite not being asked.  Since the state-of-the-art algorithms~\cite{DBLP:conf/cpm/LoukidesPTZ23} require complex black box algorithms and data structures, its evaluation is beyond the scope of this paper.  

\begin{table}
\centering
\addtolength{\tabcolsep}{-2pt}
\begin{tabular}{||c c c c c c c c||} 
 \hline
  Dataset & $k$ & $n$ & one-to-one & one-to-all & top & count & report  \\ [0.5ex] 
 \hline\hline
A\_hydrophila\_H & 13081385 & 1235604318 & 0.010 & 95.375 & 34.238 & 8.434 & 10.783\\
\hline
B\_cereus\_H & 12018488 & 1161420605 & 0.009 & 71.926 & 30.076 & 4.433 & 8.397\\
\hline
B\_cereus\_M & 2068138 & 477906536 & 0.003 & 8.654 & 4.070 & 0.698 & 0.922\\
\hline
B\_fragilis\_H & 12317015 & 1198165883 & 0.008 & 74.648 & 30.248 & 4.583 & 6.302\\
\hline
M\_abscessus\_H & 5507950 & 492353747 & 0.002 & 28.556 & 11.384 & 1.627 & 2.376\\
\hline
M\_abscessus\_M & 1986962 & 332468005 & 0.002 & 7.945 & 4.026 & 0.656 & 0.883\\
\hline
R\_sphaeroides\_H & 7727979 & 583472776 & 0.003 & 51.254 & 19.836 & 2.811 & 4.310\\
\hline
R\_sphaeroides\_M & 1509743 & 208072134 & 0.002 & 5.954 & 2.925 & 0.516 & 0.722\\
\hline
S\_aureus\_H & 7601994 & 741898249 & 0.005 & 32.004 & 13.138 & 1.832 & 2.729\\
\hline
V\_cholerae\_H & 3854210 & 353704636 & 0.002 & 15.161 & 6.823 & 1.085 & 1.451\\
\hline
V\_cholerae\_M & 1577183 & 293246651 & 0.002 & 5.826 & 2.967 & 0.510 & 0.670\\
\hline
X\_axonopodis\_H & 11669945 & 1096782704 & 0.009 & 76.691 & 30.474 & 5.002 & 7.976\\
\hline
\end{tabular}
\caption{Running time (in ms) of queries for genome assemblies datasets.} \label{tab:realQueriesItalian} 
\end{table}

\section{Conclusion}
\label{sec:conclusion}
We proposed a new algorithm for computing HOG and performed an empirical evaluation of all the practical algorithms. While the previous optimal algorithm~\cite{DBLP:conf/cpm/000421} improves the other algorithms in time, it comes at the expense of increased memory usage. Our algorithm improves both time and memory with respect to all previous algorithms. Moreover, our algorithm is also arguably more intuitive and easier to implement using only elementary arrays. To highlight the significance of our results, we also considered some applications of the HOG on variants of the APSP problem and demonstrated its practicality by showing that it requires acceptable time despite the large size of the dataset. 
In the future, faster algorithms based on HOG for the studied variants of APSP might prove useful. Possibly, the structure of HOG can result in superior algorithms both theoretically and in practice.

%At last, we compare these queries result with the state of the art paper \cite{DBLP:conf/cpm/LoukidesPTZ23}. Our implementation is easier to implement and doesn't use any complicated data structures like in the state of the art paper. We also have room to improve the bounds by HOG in future. 

\bibliography{paper}

\appendix

\section{Justification of Implementation Approach}
\label{apn:impChoices}
 
% Please add the following required packages to your document preamble:
% \usepackage{multirow}

% \usepackage{tabularray}

% Actual performance:
% W/o dump via A vs Via Ehog for small graphs only (format table 5). Memory and Time. - Done  

% With dump via Ehog, for small graphs=>TIme+ mem => actual + relative. - Done

% Implementations of A and E, list-based, pointer-based, array-based.

% MAIN BODY:
% Relative using EHOG with dump: Time+Mem - Done

% Appendix actual using EHOG with dump: Time+Mem - Done

\begin{table}
\centering
\addtolength{\tabcolsep}{2pt}
%\vspace{-2em}
\begin{tabular}{|l l l | l | l | l | l |}
\hline
\multicolumn{3}{|l|}{Dataset}                                                                                 & clementina          & sinensis            & trifoliata          & elegans             \\ \hline
\multicolumn{1}{|l|}{\multirow{4}{*}{\texttt{New}}}      & \multicolumn{1}{l|}{\multirow{2}{*}{Via $\cal A$}} & Mem     & 3.37G               & 5.39G               & 1.80G               & 4.35G               \\ \cline{3-7} 
\multicolumn{1}{|l|}{}                          & \multicolumn{1}{l|}{}                              & Time(s) & 3.11                & 5.37                & 1.75                & 4.42                \\ \cline{2-7} 
\multicolumn{1}{|l|}{}                          & \multicolumn{1}{l|}{\multirow{2}{*}{Via $\cal E$}} & Mem     & 2.38G               & 3.80G               & 1.27G               & 3.11G               \\ \cline{3-7} 
\multicolumn{1}{|l|}{}                          & \multicolumn{1}{l|}{}                              & Time(s) & 3.33                & 5.84                & 1.83                & 4.65                \\ \hline
\multicolumn{1}{|l|}{\multirow{4}{*}{\texttt{Khan}}}     & \multicolumn{1}{l|}{\multirow{2}{*}{Via $\cal A$}} & Mem     & 3.70G               & 5.91G               & 1.98G               & 4.84G               \\ \cline{3-7} 
\multicolumn{1}{|l|}{}                          & \multicolumn{1}{l|}{}                              & Time(s) & 3.76                & 6.35                & 1.98                & 5.07                \\ \cline{2-7} 
\multicolumn{1}{|l|}{}                          & \multicolumn{1}{l|}{\multirow{2}{*}{Via $\cal E$}} & Mem     & 2.38G               & 3.80G               & 1.27G               & 3.11G               \\ \cline{3-7} 
\multicolumn{1}{|l|}{}                          & \multicolumn{1}{l|}{}                              & Time(s) & 3.54                & 6.03                & 1.76                & 4.81                \\ \hline
\multicolumn{1}{|l|}{\multirow{4}{*}{\texttt{ParkCPR}}}  & \multicolumn{1}{l|}{\multirow{2}{*}{Via $\cal A$}} & Mem     & 2.62G               & 4.18G               & 1.40G               & 3.38G               \\ \cline{3-7} 
\multicolumn{1}{|l|}{}                          & \multicolumn{1}{l|}{}                              & Time(s) & 4.07                & 6.82                & 2.08                & 6.11                \\ \cline{2-7} 
\multicolumn{1}{|l|}{}                          & \multicolumn{1}{l|}{\multirow{2}{*}{Via $\cal E$}} & Mem     & 2.38G               & 3.80G               & 1.27G               & 3.11G               \\ \cline{3-7} 
\multicolumn{1}{|l|}{}                          & \multicolumn{1}{l|}{}                              & Time(s) & 3.90                & 6.60                & 2.01                & 5.80                \\ \hline
\multicolumn{1}{|l|}{\multirow{4}{*}{\texttt{ParkPCPR}}} & \multicolumn{1}{l|}{\multirow{2}{*}{Via $\cal A$}} & Mem     & 5.12G               & 7.88G               & 2.64G               & 6.54G               \\ \cline{3-7} 
\multicolumn{1}{|l|}{}                          & \multicolumn{1}{l|}{}                              & Time(s) & 3.83                & 6.14                & 1.93                & 4.96                \\ \cline{2-7} 
\multicolumn{1}{|l|}{}                          & \multicolumn{1}{l|}{\multirow{2}{*}{Via $\cal E$}} & Mem     & 3.38G               & 5.07G               & 1.70G               & 4.36G               \\ \cline{3-7} 
\multicolumn{1}{|l|}{}                          & \multicolumn{1}{l|}{}                              & Time(s) & 4.28                & 6.89                & 2.16                & 5.66                \\ \hline
\multicolumn{1}{|l|}{\multirow{4}{*}{\texttt{CazauxR}}}  & \multicolumn{1}{l|}{\multirow{2}{*}{Via $\cal A$}} & Mem     & -                   & -                   & -                   & -                   \\ \cline{3-7} 
\multicolumn{1}{|l|}{}                          & \multicolumn{1}{l|}{}                              & Time(s) & \textgreater{}30min & \textgreater{}30min & \textgreater{}30min & \textgreater{}30min \\ \cline{2-7} 
\multicolumn{1}{|l|}{}                          & \multicolumn{1}{l|}{\multirow{2}{*}{Via $\cal E$}} & Mem     & 2.21G               & 3.53G               & 1.18G               & 2.89                \\ \cline{3-7} 
\multicolumn{1}{|l|}{}                          & \multicolumn{1}{l|}{}                              & Time(s) & 19.37               & 41.73               & 5.05                & 132.26              \\ \hline
\end{tabular}
\vspace{1em}
\caption{Actual Performance of the algorithms on bacterial genome datasets using input $P$} 
\label{tab:actualPerformanceGA} 
\vspace{-2em}
\end{table}

The computation of $\cal H$ can be performed directly using the strings $P$ and hence $\cal A$ or from $\cal E$ which is claimed to be more scalable. \Cref{tab:actualPerformanceGA} shows a comparison between the two approaches where less memory is required when computing via $\cal E$ as compared to $\cal A$ sometimes at the expense of slightly higher running time. 

\begin{table}[htbp]
% \renewcommand{\arraystretch}{0.5}
% \setlength\extrarowheight{-5pt}
% \vspace{-40pt}
\vspace{-2em}
\centering
\scalebox{1}{\begin{tblr}{
  cell{1}{1} = {c=3}{},
  cell{2}{1} = {r=4}{},
  cell{2}{2} = {r=2}{},
  cell{4}{2} = {r=2}{},
  cell{6}{1} = {r=4}{},
  cell{6}{2} = {r=2}{},
  cell{8}{2} = {r=2}{},
  cell{10}{1} = {r=4}{},
  cell{10}{2} = {r=2}{},
  cell{12}{2} = {r=2}{},
  cell{14}{1} = {r=4}{},
  cell{14}{2} = {r=2}{},
  cell{16}{2} = {r=2}{},
  cell{18}{1} = {r=4}{},
  cell{18}{2} = {r=2}{},
  cell{20}{2} = {r=2}{},
  cell{22}{1} = {r=4}{},
  cell{22}{2} = {r=2}{},
  cell{24}{2} = {r=2}{},
  vlines,
  hline{1-2,6,10,14,18,22,26} = {-}{},
  hline{3,5,7,9,11,13,15,17,19,21,23,25} = {3-7}{},
  hline{4,8,12,16,20,24} = {2-7}{},
}
Dataset                   &          &         & clementina & sinensis & trifoliata & elegans \\ Computing
$\cal E$ & Actual   & Mem     & 2.38G      & 3.80G    & 1.27G      & 3.11G   \\
                          &          & Time(s) & 3.37       & 6.01     & 1.86       & 4.63    \\
                          & Relative & Mem     & 157.75x    & 132.17x  & 129.03x    & 41.79x  \\
                          &          & Time & 61.93x     & 73.75x   & 86.28x     & 18.26x  \\
\texttt{New}                       & Actual   & Mem     & 19.41M     & 28.73M   & 9.84M      & 78.45M  \\
                          &          & Time(s) & 0.05       & 0.08     & 0.02       & 0.25    \\
                          & Relative & Mem     & 1.29x      & 1.00x    & 1.00x      & 1.05x   \\
                          &          & Time & 1.00x      & 1.00x    & 1.00x      & 1.00x   \\
\texttt{Khan}                      & Actual   & Mem     & 99.44M     & 145.05M  & 46.77M     & 249.32M \\
                          &          & Time(s) & 0.09       & 0.15     & 0.04       & 0.36    \\
                          & Relative & Mem     & 6.59x      & 5.05x    & 4.75x      & 3.35x   \\
                          &          & Time & 1.69x      & 1.86x    & 1.86x      & 1.43x   \\
\texttt{ParkCPR}                   & Actual   & Mem     & 20.50M     & 29.10M   & 10.54M     & 74.42M  \\
                          &          & Time(s) & 0.39       & 0.63     & 0.17       & 1.39    \\
                          & Relative & Mem     & 1.36x      & 1.01x    & 1.07x      & 1.00x   \\
                          &          & Time & 7.16x      & 7.77x    & 7.86x      & 5.47x   \\
\texttt{ParkPCPR}                  & Actual   & Mem     & 3.38G      & 5.07G    & 2.20G      & 4.51G   \\
                          &          & Time(s) & 11.01      & 17.48    & 6.76       & 14.81   \\
                          & Relative & Mem     & 223.58x    & 176.54x  & 223.44x    & 60.60x  \\
                          &          & Time & 202.23x    & 214.45x  & 313.62x    & 58.39x  \\
\texttt{CazauxR}                   & Actual   & Mem     & 15.10M     & 32.30M   & 47.40M     & 134.53M \\
                          &          & Time(s) & 5.36       & 24.04    & 59.81      & 195.86  \\
                          & Relative & Mem     & 1.00x      & 1.12x    & 4.82x      & 1.81x   \\
                          &          & Time & 98.41x     & 294.86x  & 2.77Kx     & 772.40x 
\end{tblr}}
\vspace{1em}
\caption{Actual and Relative performance of the algorithm using $\cal E$ as input.}
\vspace{-4em}
\label{tab:ActRelDumpSmall}
\end{table}

We further note that a significant part of the time is required in computing $\cal A$ and $\cal E$ which is computed by all the algorithms. This makes the relative performances of the algorithms less pronounced. A clearer picture would be possible in case we use $\cal E$ directly as an input. The corresponding difference is highlighted in \Cref{tab:ActRelDumpSmall}.

Finally, we tried a linked list to store children of both $\cal A$ and $\cal E$ nodes. In this case, we are converting the $\cal A$ nodes directly into the $\cal E$ while changing the links only. We delete additional $\cal A$ nodes that are not included in the $\cal E$. Also, we used a pointer-based in which We store $\cal A$ children into an array of size of number of characters in the dataset and $\cal E$ children into a dynamic vector. Then, We make $\cal A$ using pointers of the $\cal A$ nodes. While constructing $\cal E$, We are making new $\cal E$ nodes and deleting the $\cal A$ nodes side by side while DFS the $\cal A$. Then, finally, we store $\cal A$ as a vector of $\cal A$ nodes. In this, we are not deleting the $\cal A$ nodes at all. As in \Cref{tab:difApproachComp}, We tested all these three approaches using A\_hydophila\_H Dataset and \texttt{New} algorithm, The vector of $\cal A$ nodes is the fastest and most memory-efficient approach in all three approaches. So, We have used this approach in the experimentation.

\begin{table}
\vspace{-1em}
\centering
\addtolength{\tabcolsep}{2pt}
\begin{tabular}{|l|l|l|l|l|}
\hline
                              &         & Array Based & Pointer Based & List Based \\ \hline
\multirow{2}{*}{Via $\cal A$} & Mem     & 35.37G      & 47.31G        & 109.14G    \\ \cline{2-5} 
                              & Time(s) & 140.32      & 244.36        & 360.86     \\ \hline
\multirow{2}{*}{Via $\cal E$} & Mem     & 56.90G      & 57.78G        & 104.34G    \\ \cline{2-5} 
                              & Time(s) & 218.91      & 388.87        & 528.76     \\ \hline
\end{tabular}
\vspace{1em}
\caption{Comparison between different implementations using A\_hydrophila\_H Dataset} 
\label{tab:difApproachComp} 
%\vspace{-3em}
\end{table}

\section{Actual performance of algorithms on real datasets}
\label{apn:actP}

\begin{table}[h]
\centering
\addtolength{\tabcolsep}{-5pt}
%\vspace{-2em}
%\addtolength{\tabcolsep}{-3pt}
\begin{tabular}{|l|lr|lr|lr|lr|lr|}
\hline
\multirow{2}{*}{Dataset} & \multicolumn{2}{l|}{Computing $\cal E$}                              & \multicolumn{2}{l|}{\texttt{New}}                                   & \multicolumn{2}{l|}{\texttt{Khan}}                                   & \multicolumn{2}{l|}{\texttt{ParkCPR}}                               & \multicolumn{2}{l|}{\texttt{ParkPCPR}}                              \\ \cline{2-11} 
                         & \multicolumn{1}{l|}{Mem}    & \multicolumn{1}{l|}{Time(s)} & \multicolumn{1}{l|}{Mem}    & \multicolumn{1}{l|}{Time(s)} & \multicolumn{1}{l|}{Mem}     & \multicolumn{1}{l|}{Time(s)} & \multicolumn{1}{l|}{Mem}    & \multicolumn{1}{l|}{Time(s)} & \multicolumn{1}{l|}{Mem}    & \multicolumn{1}{l|}{Time(s)} \\ \hline
clementina               & \multicolumn{1}{l|}{2.38G}  & 3.37                         & \multicolumn{1}{l|}{19.41M} & 0.05                         & \multicolumn{1}{l|}{99.44M}  & 0.09                         & \multicolumn{1}{l|}{20.50M} & 0.39                         & \multicolumn{1}{l|}{3.38G}  & 11.01                        \\ \hline
sinensis                 & \multicolumn{1}{l|}{3.80G}  & 6.01                         & \multicolumn{1}{l|}{28.73M} & 0.08                         & \multicolumn{1}{l|}{145.05M} & 0.15                         & \multicolumn{1}{l|}{29.10M} & 0.63                         & \multicolumn{1}{l|}{5.07G}  & 17.48                        \\ \hline
trifoliata               & \multicolumn{1}{l|}{1.27G}  & 1.86                         & \multicolumn{1}{l|}{9.84M}  & 0.02                         & \multicolumn{1}{l|}{46.77M}  & 0.04                         & \multicolumn{1}{l|}{10.54M} & 0.17                         & \multicolumn{1}{l|}{2.20G}  & 6.76                         \\ \hline
elegans                  & \multicolumn{1}{l|}{3.11G}  & 4.63                         & \multicolumn{1}{l|}{78.45M} & 0.25                         & \multicolumn{1}{l|}{249.32M} & 0.36                         & \multicolumn{1}{l|}{74.42M} & 1.39                         & \multicolumn{1}{l|}{4.51G}  & 14.81                        \\ \hline
A\_hydrophila\_H     & \multicolumn{1}{l|}{54.99G} & 226.8                        & \multicolumn{1}{l|}{34.47G} & 170.85                       & \multicolumn{1}{l|}{54.96G}  & 253.51                       & \multicolumn{1}{l|}{30.25G} & 680.75                       & \multicolumn{1}{l|}{81.11G} & 314.12                       \\ \hline
B\_cereus\_H         & \multicolumn{1}{l|}{53.29G} & 185.76                       & \multicolumn{1}{l|}{29.86G} & 136.31                       & \multicolumn{1}{l|}{46.55G}  & 200.12                       & \multicolumn{1}{l|}{29.86G} & 534.82                       & \multicolumn{1}{l|}{71.99G} & 268.18                       \\ \hline
B\_cereus\_M         & \multicolumn{1}{l|}{19.41G} & 46.2                         & \multicolumn{1}{l|}{4.47G}  & 19.9                         & \multicolumn{1}{l|}{7.64G}   & 26.75                        & \multicolumn{1}{l|}{3.95G}  & 69.81                        & \multicolumn{1}{l|}{26.87G} & 97.5                         \\ \hline
B\_fragilis\_H       & \multicolumn{1}{l|}{54.32G} & 158.43                       & \multicolumn{1}{l|}{33.29G} & 171.11                       & \multicolumn{1}{l|}{52.83G}  & 212.9                        & \multicolumn{1}{l|}{29.89G} & 579.89                       & \multicolumn{1}{l|}{78.63G} & 295.73                       \\ \hline
M\_abscessus\_H      & \multicolumn{1}{l|}{14.44G} & 51.43                        & \multicolumn{1}{l|}{5.62G}  & 25.44                        & \multicolumn{1}{l|}{11.22G}  & 35.56                        & \multicolumn{1}{l|}{5.03G}  & 105.8                        & \multicolumn{1}{l|}{22.64G} & 78.75                        \\ \hline
M\_abscessus\_M      & \multicolumn{1}{l|}{11.35G} & 26.57                        & \multicolumn{1}{l|}{2.27G}  & 10.43                        & \multicolumn{1}{l|}{4.54G}   & 14.43                        & \multicolumn{1}{l|}{2.03G}  & 37.96                        & \multicolumn{1}{l|}{16.27G} & 58.01                        \\ \hline
R\_sphaeroides\_H    & \multicolumn{1}{l|}{22.83G} & 99.02                        & \multicolumn{1}{l|}{12.10G} & 67.89                        & \multicolumn{1}{l|}{22.17G}  & 92.39                        & \multicolumn{1}{l|}{10.73G} & 284.49                       & \multicolumn{1}{l|}{35.36G} & 135.88                       \\ \hline
R\_sphaeroides\_M    & \multicolumn{1}{l|}{8.40G}  & 15.61                        & \multicolumn{1}{l|}{1.84G}  & 8.6                          & \multicolumn{1}{l|}{3.55G}   & 11.49                        & \multicolumn{1}{l|}{1.84G}  & 30.77                        & \multicolumn{1}{l|}{11.28G} & 43.85                        \\ \hline
S\_aureus\_H         & \multicolumn{1}{l|}{29.97G} & 86.69                        & \multicolumn{1}{l|}{20.19G} & 98.87                        & \multicolumn{1}{l|}{31.92G}  & 121.93                       & \multicolumn{1}{l|}{17.72G} & 341.56                       & \multicolumn{1}{l|}{46.96G} & 170.71                       \\ \hline
V\_cholerae\_H       & \multicolumn{1}{l|}{12.23G} & 41.63                        & \multicolumn{1}{l|}{5.35G}  & 24.12                        & \multicolumn{1}{l|}{9.73G}   & 32.72                        & \multicolumn{1}{l|}{4.74G}  & 86.1                         & \multicolumn{1}{l|}{18.87G} & 65.1                         \\ \hline
V\_cholerae\_M       & \multicolumn{1}{l|}{11.07G} & 24.69                        & \multicolumn{1}{l|}{2.79G}  & 12.22                        & \multicolumn{1}{l|}{4.93G}   & 16.44                        & \multicolumn{1}{l|}{2.47G}  & 39.32                        & \multicolumn{1}{l|}{15.87G} & 53.36                        \\ \hline
X\_axonopodis\_H     & \multicolumn{1}{l|}{53.72G} & 180.35                       & \multicolumn{1}{l|}{29.82G} & 124.58                       & \multicolumn{1}{l|}{42.38G}  & 169.8                        & \multicolumn{1}{l|}{29.82G} & 484.43                       & \multicolumn{1}{l|}{68.32G} & 263.64                       \\ \hline
\end{tabular}
\label{tab:ActPerfRealDatasets}
\vspace{1em}
\caption{Actual Performance of Algorithms using Real Datasets}
\vspace{-2em}
\end{table}

\begin{table}
\centering
\addtolength{\tabcolsep}{-4pt}
%\vspace{-2em}
\begin{tabular}{||c c c c c c c c c||} 
 \hline
  Dataset & $k$ & $n$  & Algorithm & one-to-one & one-to-all & top & count & report  \\ [0.5ex] 
 \hline\hline
A\_hydrophila\_H& 13081385 & 1235604318 & HOG & 0.010 & 95.375 & 34.238 & 8.434 & 10.783 \\ 
 & & & KMP & 0.00045 & 1703.225 & 6515.548 & 1684.589 & 1689.835 \\ 
 & & & Rel & 0.05$\times$ & 17.86$\times$ & 190.30$\times$ & 199.74$\times$ & 156.71$\times$ \\ \hline 
B\_cereus\_H& 12018488 & 1161420605 & HOG & 0.009 & 71.926 & 30.076 & 4.433 & 8.397 \\ 
 & & & KMP & 0.00045 & 1439.163 & 5421.761 & 1376.226 & 1376.502 \\ 
 & & & Rel & 0.05$\times$ & 20.01$\times$ & 180.27$\times$ & 310.45$\times$ & 163.93$\times$ \\ \hline 
B\_cereus\_M& 2068138 & 477906536 & HOG & 0.003 & 8.654 & 4.070 & 0.698 & 0.922 \\ 
 & & & KMP & 0.00056 & 693.670 & 1420.341 & 664.633 & 668.875 \\ 
 & & & Rel & 0.19$\times$ & 80.16$\times$ & 348.98$\times$ & 952.20$\times$ & 725.46$\times$ \\ \hline 
B\_fragilis\_H& 12317015 & 1198165883 & HOG & 0.008 & 74.648 & 30.248 & 4.583 & 6.302 \\ 
 & & & KMP & 0.00042 & 1354.231 & 6524.023 & 1387.009 & 1392.511 \\ 
 & & & Rel & 0.05$\times$ & 18.14$\times$ & 215.68$\times$ & 302.64$\times$ & 220.96$\times$ \\ \hline 
M\_abscessus\_H& 5507950 & 492353747 & HOG & 0.002 & 28.556 & 11.384 & 1.627 & 2.376 \\ 
 & & & KMP & 0.00048 & 985.172 & 2904.419 & 932.243 & 934.678 \\ 
 & & & Rel & 0.24$\times$ & 34.50$\times$ & 255.13$\times$ & 572.98$\times$ & 393.38$\times$ \\ \hline 
M\_abscessus\_M& 1986962 & 332468005 & HOG & 0.002 & 7.945 & 4.026 & 0.656 & 0.883 \\ 
 & & & KMP & 0.00072 & 848.449 & 1735.039 & 845.944 & 817.385 \\ 
 & & & Rel & 0.36$\times$ & 106.79$\times$ & 430.96$\times$ & 1289.55$\times$ & 925.69$\times$ \\ \hline 
R\_sphaeroides\_H& 7727979 & 583472776 & HOG & 0.003 & 51.254 & 19.836 & 2.811 & 4.310 \\ 
 & & & KMP & 0.00060 & 1962.607 & 4959.768 & 1970.710 & 1973.718 \\ 
 & & & Rel & 0.20$\times$ & 38.29$\times$ & 250.04$\times$ & 701.07$\times$ & 457.94$\times$ \\ \hline 
R\_sphaeroides\_M& 1509743 & 208072134 & HOG & 0.002 & 5.954 & 2.925 & 0.516 & 0.722 \\ 
 & & & KMP & 0.00068 & 585.568 & 1159.319 & 574.844 & 573.252 \\ 
 & & & Rel & 0.34$\times$ & 98.35$\times$ & 396.35$\times$ & 1114.04$\times$ & 793.98$\times$ \\ \hline 
S\_aureus\_H& 7601994 & 741898249 & HOG & 0.005 & 32.004 & 13.138 & 1.832 & 2.729 \\ 
 & & & KMP & 0.00038 & 855.693 & 3536.636 & 859.875 & 866.531 \\ 
 & & & Rel & 0.08$\times$ & 26.74$\times$ & 269.19$\times$ & 469.36$\times$ & 317.53$\times$ \\ \hline 
V\_cholerae\_H& 3854210 & 353704636 & HOG & 0.002 & 15.161 & 6.823 & 1.085 & 1.451 \\ 
 & & & KMP & 0.00045 & 641.944 & 2146.911 & 638.035 & 638.916 \\ 
 & & & Rel & 0.23$\times$ & 42.34$\times$ & 314.66$\times$ & 588.05$\times$ & 440.33$\times$ \\ \hline 
V\_cholerae\_M& 1577183 & 293246651 & HOG & 0.002 & 5.826 & 2.967 & 0.510 & 0.670 \\ 
 & & & KMP & 0.00064 & 607.380 & 1142.848 & 579.774 & 579.888 \\ 
 & & & Rel & 0.32$\times$ & 104.25$\times$ & 385.19$\times$ & 1136.81$\times$ & 865.50$\times$ \\ \hline 
X\_axonopodis\_H& 11669945 & 1096782704 & HOG & 0.009 & 76.691 & 30.474 & 5.002 & 7.976 \\ 
 & & & KMP & 0.00046 & 1591.124 & 6404.583 & 1632.711 & 1626.796 \\ 
 & & & Rel & 0.05$\times$ & 20.75$\times$ & 210.17$\times$ & 326.41$\times$ & 203.96$\times$ \\ \hline 
\end{tabular}
\vspace{1em}
\caption{Running time (in $ms$) of queries for genome assemblies datasets for HOG and KMP based algorithms.} 
\vspace{-2em}
\label{tab:realQueriesKMP} 
\end{table}

\section{Pseudocodes for APSP variants}
\label{apn:pseudocodes}

The algorithm for Paired string query ($OneToOne$) is presented in \Cref{alg:onetoone}, while that of Multi String Query $OneToAll$ is shown in \Cref{alg:onetoall}. This algorithm can be extended to the other Multi String Queries as shown in \Cref{algo:onetoallEx}.

\begin{algorithm}
\BlankLine
$v \gets p_i$\; 
\While{\textbf{true}}{
$(min,max)
\gets$ $SubTreeMinMax(v)$\;%Minimum and Maximum index of strings in $P$ with prefix $v$\;
\lIf{$j\geq min$ $\textbf{and}$ $j\leq max$}
{\textbf{return} $v$}
\lElse{$v \gets$ Suffix link of $v$}
}
%\textbf{return} $v$\;

%$current \gets$ node corresponding to $S_i$\;
%\While{$current$}{
%$start \gets$ starting index corresponding to $current$ node \;
%$end \gets$ ending index corresponding to $current$ node \;
%\If{$start \geq j$ $\textbf{or}$ $end \leq j$}{
%\textbf{return} $current.length$\;
%}
%$current \gets$ suffix node for the $current$ node\;
%\textbf{return} $0$\;
%}
\caption{$OneToOne(i,j)$ %: report the string $ov(p_i,p_j)$ %One to One query for string i and string j, will return the length of the longest possible suffix of string i which is also a suffix of string j
}
\label{alg:onetoone}
\end{algorithm}

\begin{algorithm}
\ForEach{$j\in [1,k]$}{$ans[j]\gets \epsilon$\; $next[j]\gets j$\;}
\BlankLine
$v \gets p_i$\;
\While{$v \neq \epsilon$}{
$(min,max)
\gets SubTreeMinMax(v)$\; %$ Minimum and Maximum index of strings in $P$ with prefix $v$\;
$j \gets min$\; 
\While{$j \leq max$}{
\lWhile{$next[j]\neq j$ \textbf{ and } $next[j] \leq max$}{ $j\gets next[j]$}

\If{$next[j]==j$}{
$ans[j] \gets v$\;
$j \gets j+1$\;
}
}
$next[min] \gets max+1$\;
$v \gets$ suffix link of $v$\;
}

\textbf{return} $ans$\;

%$answer \gets$ vector of size equal to total number of strings\;
%$current \gets$ Node corresponding to string i\;
%$NextRemaining \gets$ Index of each Node \;

%\While{$current $}{
%$start \gets$ starting index corresponding to $current$ node \;
%$end \gets$ ending index corresponding to $current$ node \;
%$curIndex \gets start$\; 
%\While{$curIndex \neq end$}{
%\While{$curIndex \neq NextRemaining[curIndex]$}{$curIndex \gets NextRemaining[curIndex]$\;}
%$answer[curIndex] \gets current.length$\;
%$curIndex \gets curIndex+1$\;
%}
%$NextRemaining[start] \gets end+1$\;
%$current \gets$ suffix node for the $current$ node\;
%}

%\textbf{return} answer\; 
\caption{$OneToAll(i)$ %Output $SPL_{i,j}$ for every $j \in [1, k]$
}
\label{alg:onetoall}
\end{algorithm}

\begin{algorithm}
\ForEach{$j\in [1,k]$}
{
{\color{blue} $ans_o[j]\gets \epsilon$}\;
$next[j]\gets j$\;}
{\color{red} $ans_r\gets \emptyset$} $|$
{\color{cyan} $ans_c\gets 0$} $|$
{\color{magenta} $ans_t\gets \emptyset$}
\; 

\BlankLine
$v \gets p_i$\;
\While{$v \neq \epsilon$ 
{\color{red} \textbf{and} $|v|\geq l$} $|$ 
{\color{cyan} \textbf{and} $|v|\geq l$} $|$ {\color{magenta} \textbf{and} $|ans_t|\leq c$}}{
$(min,max)
\gets SubTreeMinMax(v)$\; %$ Minimum and Maximum index of strings in $P$ with prefix $v$\;
$j \gets min$\; 
\While{$j \leq max$}{
\lWhile{$next[j]\neq j$ \textbf{ and } $next[j] \leq max$}{ $j\gets next[j]$}

\If{$next[j]==j$}{

{\color{blue} $ans_o[j] \gets v$} $|$
{\color{red} $ans_r \gets ans_r\cup \{j\}$} $|$
{\color{cyan} $ans_c \gets ans_c+1$} $|$
{\color{magenta} $ans_t \gets ans_t\cup \{j\}$}\;
$j \gets j+1$\;
}
}
$next[min] \gets max+1$\;
$v \gets$ suffix link of $v$\;
}

\textbf{return} $ans$\; 

\caption{Multi String Query: {\color{blue}$OneToAll(i)$} $|$ {\color{red}$Report(i,l)$} $|$
{\color{cyan}$Count(i,l)$} $|$
{\color{magenta}$Top(i,c)$} 
%Output $SPL_{i,j}$ for every $j \in [1, k]$
}
\label{algo:onetoallEx}
\end{algorithm}

\end{document}